\definecolor{darkgreen}{rgb}{0.0,0,0.9}
\theoremstyle{plain}
\providecommand{\customgenericname}{}
\newcommand{\newcustomtheorem}[2]{%
  \newenvironment{#1}[1]
  {%
   \renewcommand\customgenericname{#2}%
   \renewcommand\theinnercustomgeneric{##1}%
   \innercustomgeneric
  }
  {\endinnercustomgeneric}
}
\newlength{\widebarargwidth}
\newlength{\widebarargheight}
\newlength{\widebarargdepth}
\long\def\@makecaption#1#2{
        \vskip 0.8ex
        \setbox\@tempboxa\hbox{\small {\bf #1:} #2}
        \parindent 1.5em  
        \dimen0=\hsize
        \advance\dimen0 by -3em
        \ifdim \wd\@tempboxa >\dimen0
                \hbox to \hsize{
                        \parindent 0em
                        \hfil 
                        \parbox{\dimen0}{\def\baselinestretch{0.96}\small
                                {\bf #1.} #2
                                } 
                        \hfil}
        \else \hbox to \hsize{\hfil \box\@tempboxa \hfil}
        \fi
        }
\long\def\comment#1{}
\newcommand{\Prob}{\ensuremath{{\mathbb{P}}}}
\DeclareSymbolFont{rsfs}{U}{rsfs}{m}{n}
\DeclareSymbolFontAlphabet{\mathscrsfs}{rsfs}
\numberwithin{equation}{section}
\newtheoremstyle{myexample} 
    {\topsep}                    
    {\topsep}                    
    {\rm }                   
    {}                           
    {\bf }                   
    {.}                          
    {.5em}                       
    {}  
\newtheoremstyle{myremark} 
    {\topsep}                    
    {\topsep}                    
    {\rm}                        
    {}                           
    {\bf}                        
    {.}                          
    {.5em}                       
    {}  
\newtheorem{claim}{Claim}[section]
\newtheorem{lemma}[claim]{Lemma}
\newtheorem{fact}[claim]{Fact}
\newtheorem{conjecture}[claim]{Conjecture}
\newtheorem{theorem}{Theorem}
\newtheorem{proposition}[claim]{Proposition}
\newtheorem{definition}[claim]{Definition}
\theoremstyle{myremark}
\newtheorem{remark}{Remark}[section]
\pgfplotsset{compat=1.5}
\DeclarePairedDelimiterX{\inp}[2]{\langle}{\rangle}{#1, #2}
\DeclareMathOperator{\pr}{\mathbb{P}}
\newcommand{\G}{{\sf G}}
\newcommand{\PCC}{\emph{Planted Clique Conjecture}}
\newcommand{\A}{\mathcal{A}}
\newcommand{\PCD}{{\sf{PC_D}}(n,k)}
\newcommand{\PCR}{{\sf{PC_R}}(n,k)}
\newcommand{\Binnhalf}{{\sf Bin}\left(n,\frac{1}{2}\right)}
\newcommand{\Binnkhalf}{{\sf Bin}\left(n-k,\frac{1}{2}\right)}
\title{Finding Planted Cliques in Sublinear Time}
\author{Jay Mardia\thanks{Department of Electrical Engineering, Stanford
    University. \{jmardia, asi, kabirc\}@stanford.edu} \and Hilal Asi\footnotemark[1] \and Kabir Aladin Chandrasekher\footnotemark[1]}
\begin{document}

\date{}
%
\maketitle
\begin{abstract}
We study the planted clique problem in which a clique of size $k$ is planted in
an Erd\H{o}s-R\'enyi graph $G(n, \frac{1}{2})$ and one is interested
in recovering this planted clique. It is widely believed that it exhibits a statistical-computational gap when computational efficiency is equated with the existence of polynomial time algorithms. We study this problem under a more fine-grained computational lens and consider the following two questions- 
\begin{enumerate}
	\item \emph{Do there exist sublinear time algorithms for recovering the planted
		clique?}
	\item \emph{What is the smallest running time any algorithm can hope to have?}
\end{enumerate}

We show that because of a well known clique-completion property, very elementary sublinear time recovery algorithms do indeed exist for clique sizes $k=\omega(\sqrt{n})$. This points to a qualitatively stronger statistical-computational gap. The planted clique recovery problem can be solved without even looking at most of the input above the $\Theta(\sqrt{n})$ threshold and cannot be solved by any efficient algorithm below it.

A running time lower bound for 
the recovery problem follows easily from the results of~\cite{racz2019finding}, and this implies our recovery algorithms are optimal whenever 
$k = \Omega(n^{\frac{2}{3}})$. However, for $k=o(n^{\frac{2}{3}})$ there is a gap between our algorithmic upper bound and the information-theoretic lower bound implied by~\cite{racz2019finding}. 

With some caveats, we show stronger \textit{detection} lower bounds based on the \textit{Planted Clique Conjecture} for a natural but restricted class of algorithms. The key idea is to relate very fast sublinear time algorithms for detecting large planted cliques to polynomial time algorithms for detecting small planted cliques.
\end{abstract}

\setcounter{tocdepth}{2}

\section{Introduction}
\label{sec:Introduction}
The planted clique problem, in which a clique of size $k$ is planted in
an Erd\H{o}s-R\'enyi graph $G(n, \frac{1}{2})$ has been well studied over the past two decades. It
has emerged as a fruitful testbed to understand the misalignment between statistical and computational notions of efficiency. Much research has led to the belief that efficient polynomial time algorithms only exist for clique sizes $k$ larger than $\Omega(\sqrt{n})$. Above this threshold, previous work \cite{feige2010finding,dekel2014finding,deshpande2015} has shown (nearly) linear time 
algorithms that run in time $\widetilde{O}(n^2)$. However, no lower bounds were proven to establish
the optimality of these algorithms.

One reason to stop at linear time algorithms is that sublinear time algorithms almost inherently entail a non-zero probability of failure, since they must randomly choose what small portion of the input to see. However, for statistical problems, it is already true that no algorithm whatsoever can succeed with zero error probability. It is then intriguing to ask if 
it is possible to recover the planted clique more efficiently 
by looking at only a small subset of the graph. We investigate the following two questions:

\begin{enumerate}
\item \emph{Do there exist sublinear time algorithms for recovering the planted
clique?}
\item \emph{What is the smallest running time any algorithm can hope to have?}
\end{enumerate}

We provide partial answers 
to these questions by developing sublinear time algorithms for the
planted clique problem and establishing some evidence---based on the 
\PCC---of their optimality.

\subsection{Our Contribution}

\textbf{Algorithms:} 

Algorithm~\ref{alg:subsampledcounting} runs in time $\widetilde{O}(n^{3/2})$ and recovers the clique with high
probability for $k = \Theta(\sqrt{n\log{n}})$. Algorithm~\ref{alg:subsample+khdac} is very simple extension\footnote{While it is possible to formulate Algorithm~\ref{alg:subsampledcounting} and Algorithm~\ref{alg:subsample+khdac} as a single unified algorithm, we prefer to keep them separate for pedagogical clarity.} and shows that for even
larger clique sizes there is an
$\widetilde{O}\left(\left(n/k\right)^3 + n\right)$ algorithm for clique
recovery. 
Finally, Algorithm~\ref{alg:subsample+filter} runs in time
${O}\left(n^2/\left(\frac{k^2}{n}\exp{\left(\frac{k^2}{24n}\right)}\right)\right)$ and recovers the planted clique when $\omega(\sqrt{n}) = k = o(\sqrt{n\log{n}})$. Given the widespread belief (which goes by the name \PCC) that no polynomial time algorithm can recover the planted clique if $k = O(n^{\frac{1}{2}-\delta})$ for any constant $\delta >0$, we certainly do not expect sublinear time algorithms to work in that regime.

Thus our work builds towards the idea that the planted clique problem can either be solved without even looking at the entire graph, or it needs more than a polynomial amount of time. In a qualitative sense, this is a much bigger ‘gap’ than that suggested by lumping all polynomial time algorithms together under one umbrella.

We stress that our algorithms are extremely straightforward, and do not involve much technical innovation. The key idea that enables sublinear time recovery is that of clique completion, a concept already widely known in the literature. While none of the existing clique completion routines run in sublinear time, it is quite easy to create one. Our algorithmic contribution lies in realising that clique completion allows sublinear time algorithms, rather than any new technical ideas.

\textbf{Impossibility Results:}

We begin our investigation of the second question by observing that the results 
of \cite{racz2019finding} imply that any recovery algorithm 
requires time at least $\Omega(\frac{n^2}{k^2} + n)$. This has the following slightly surprising consequence. Once the planted clique is of size at least $k = \Omega\left( n^{\frac{2}{3}}\right) $, we get an optimal recovery runtime of $\widetilde{\Theta}(n)$. Increasing the size of the planted clique further does not make the recovery problem easier. In fact, detection versions of our algorithm do not require the additive $\tilde{O}(n)$ running time, and run in time $\widetilde{O}\left(\frac{n^3}{k^3}\right)$. 

This means that for large $k$, there is indeed a fine-grained computational complexity separation between the detection and recovery problems, even though we know from \cite{alon2007testing} that recovery can be reduced to detection.

The lower bound techniques of \cite{racz2019finding} are purely information theoretic, and it can be seen that such techniques will not be able to prove stronger lower bounds, of the form $\Omega(\frac{n^3}{k^3})$ that we might hope for given our algorithmic results. 
To circumvent this, we aim to show stronger lower bounds using widely accepted average case computational hardness assumptions.
The most natural such assumption in this scenario is, evidently, the \PCC. 
We aim to build on hardness of the planted clique problem for small cliques (as codified by the \PCC) to show lower bounds for algorithms that recover large planted cliques. Our goal is to investigate the (however rough) notion that the hardness of the planted clique problem in all regimes is due to the same reason. We note that the rest of our lower bounds work for the easier detection version of the planted clique problem; 
these bounds imply lower bounds for the recovery problem.

We make some progress towards this goal, but with a caveat. The reductions we show use a slightly different notion of a planted clique problem, which we call ${\sf{iidPC}_D}$. In this model, each vertex is included in the clique independently with probability $\frac{k}{n}$. In the vanilla planted clique (${\sf{PC}_D}$) problem, the clique is a uniformly random subset of $k$ vertices. Our observation here is that reducing between planted clique problems with different clique sizes is very easy to do in the ${\sf{iidPC}_D}$ world, even though it is not so easy in the ${\sf{PC}_D}$ world. It is not uncommon in the literature that impossibility results for the planted clique problem use slightly differing formulations. For example \cite{feldman2017statistical} use an iid bipartite version of the problem to show impossibility results for statistical query algorithms. We discuss this issue further in the subsequent sections.

We emphasize that our starting hardness assumption can be based on either of the two formulations of planted  clique, since it is easy to show (Remark~\ref{rem:pcc-to-iidpcc}) that the \PCC~for the standard variant implies the analogous conjecture for ${\sf{iidPC}_D}$. For now, we state our results while pretending that they formally hold for the vanilla planted clique problem. Unfortunately, we can only show this connection for a restricted family of algorithms\footnote{Our algorithms for planted clique recovery are technically not in this class, but detection versions of these algorithms \textit{are}. We elaborate more on this in later sections.}. While we would like to show such a connection for broader classes of algorithms (and provide some intuition for why such a fact might be true), we are currently unable to. This is a weakness of our result.

We show that
assuming the \PCC, no \textit{non-adaptive}\footnote{Definition~\ref{def:nonadaptive-alg}} \textit{rectangular}\footnote{Definition~\ref{def:rect-alg}} algorithm can detect the existence of a planted clique of size $k = \widetilde{\Omega}(\sqrt{n})$ in time
$O\left(n^{3-\delta}/k^3 \right)$ for any constant $\delta > 0$. We have thus transformed a computational hardness assumption that distinguishes between polynomial and superpolynomial time algorithms into a result that distinguishes between more fine-grained (in fact sublinear) running times.

A careful look at the reduction used to prove this raises the possibility of a sublinear query complexity version of a statistical-computational gap. Using computationally inefficient algorithms, planted cliques can be detected with just $\widetilde{O}(\frac{n^2}{k^2})$ \cite{racz2019finding} queries to the adjacency matrix. However, our result suggests that efficient polynomial time algorithms require, at the very least, $\widetilde{\Omega}(\frac{n^3}{k^3})$ queries to the adjacency matrix. While we can only prove this for the restricted class of non-adaptive rectangular algorithms, it is plausible that such a result holds for much broader classes of algorithms. This sort of gap is in contrast to the more common notion of statistical-computational gap in terms of the size of the planted clique.

In the other direction, we show that for planted cliques of size $k = \Theta( \log n \sqrt{n})$, any detection runtime lower bound of the form $\omega(n)$ gives a non-trivial $\omega(n^2)$ runtime lower bound for detecting planted cliques of size $k = 3 \log n$ (i.e. near the information theoretic threshold below which detection is information theoretically impossible). While this is nowhere near as spectacular as claiming that no polynomial time algorithms can exist, it \textit{is} a conditional super-linear lower bound.

As discussed, our results have several caveats and restrictions. We hope that these are just the first steps in showing that the non-existence of fast sublinear time algorithms for detecting large cliques is related to the hardness of detecting small cliques.

\subsection{Open problems}

\begin{enumerate}
\item The running times of our algorithms for planted clique sizes  just above and just below 
$\Theta(\sqrt{n \log n})$ are dramatically different. For $k = \Omega(\sqrt{n \log n})$, we can recover the clique in time $\widetilde{O}(n + (\frac{n}{k})^{3}) = \widetilde{O}(n^{\frac{3}{2}})$. For $k = o(\sqrt{n \log n}) $, our algorithms are not even `truly sublinear', by which we mean that they run slower than $\Omega(n^{2-\delta})$ for any constant $\delta > 0$. Is there  some threshold phenomenon at clique size $k = \Theta(\sqrt{n \log n})$ with such different behavior above and below it, or are there faster algorithms for smaller cliques? Both positive and negative answers to the following become very interesting.
\begin{quote}
	\emph{Does there exist an algorithm which runs in time $O(n^{2 -
			\delta})$ for some constant $\delta > 0$ which can recover planted cliques of size $k =
		o(\sqrt{n \log n})$?}
\end{quote}

\item Detection versions of our algorithms are non-adaptive and rectangular. To complement this, we have shown that the $\PCC$ implies non-existence of non-adaptive rectangular algorithms that reliably solve the detection problem and run much faster than our algorithms. This leads to wondering about the power of general algorithms with an adaptive sampling strategy.
\begin{quote}
	\emph{Does there exist an adaptive and/or non-rectangular algorithm which runs in time $O(n^{\frac{3}{2} -
			\delta})$ for some constant $\delta > 0$ and reliably detects planted cliques of size $k =
		\Theta(\sqrt{n \log n})$?}
\end{quote}

\item 

To show strong lower bounds, we have relied on the most natural computational hardness assumption for this setting, namely the \PCC. However, it is plausible that other assumptions might be relevant too. Can we gain evidence for the non-existence of fast sublinear time algorithms that solve the planted clique problem using other computational hardness assumptions?
\end{enumerate}

\section{Related work}
As far as the authors are aware, the planted clique problem was first studied
in~\cite{jerrum1992large} in which Jerrum studied Markov chain Monte Carlo
methods and showed that the metropolis process cannot find cliques of size
$O(\sqrt{n})$.  It is known that just above the information theoretic threshold, $k = 2\log{n}$, there is a unique largest clique with high probability and
the brute force algorithm will successfully find the clique.  This lies in stark
contrast to where polynomial-time algorithms begin to work.  The first polynomial time algorithm was provided
in~\cite{kuvcera1995expected} although shown only to work above the degree counting
threshold $k = \Omega(\sqrt{n\log{n}})$. Several algorithms were later shown to work
for $k = \Omega(\sqrt{n})$, starting with the spectral algorithm from \cite{alon1998finding}, and including an algorithm that is based on semidefinite programming from \cite{feige2000finding}.	In fact, a line of work including more sophisticated degree counting
algorithms~\cite{feige2010finding,dekel2014finding} and approximate message passing~\cite{deshpande2015} has shown that cliques of size larger than $\Omega(\sqrt{n})$ can be found in nearly linear ($\widetilde{O}(n^2)$) time. To the best of our knowledge, no sublinear time algorithm has been proposed so far.

On the flip side, it is widely believed that no polynomial time algorithm can solve the planted clique problem for clique size significantly smaller than $O(\sqrt{n})$. Evidence for this fact has mounted up in recent years, and comes from showing that restricted classes of algorithms can not beat this bound. $\Theta(\sqrt{n})$ was shown to be a barrier for the powerful sum-of-squares
hierarchy~\cite{meka2015sum,deshpande2015improved,hopkins2018integrality,barak2019nearly} and for statistical query
algorithms~\cite{feldman2017statistical}. Hardness against circuit classes for a related problem was shown in \cite{rossman2008constant,rossman2014monotone}. This body of work has provided evidence
for a so-called statistical-computational gap.  We refer the reader
interested in statistical-computational gaps in planted problems to the
papers~\cite{wu2018statistical, bandeira2018notes,
	gamarnik2019landscape}.

As a result, a number of works have used this conjectured hardness, the \PCC~(or close variants) to show average-case hardness results for various
problems~\cite{alon2007testing,arora2010computational,berthet2013complexity,koiran2014hidden,ma2015computational,wang2016average, brennan2018reducibility,shah2019feeling,manurangsi2020strongish}.  It has
additionally been used as a cryptographic primitive~\cite{applebaum2010public}.  We follow in these footsteps by using the \PCC~as our main hardness assumption to prove lower bounds for sublinear time algorithms. A key difference here is that instead of using an assumption that talks about that gap between polynomial and superpolynomial time algorithms to obtain another such gap, we use it to show a fine-grained (in fact sublinear) hardness result that distinguishes between different polynomial running times.

In recent years, there have indeed been reductions of this form. Such connections have resulted in the burgeoning field of fine-grained complexity (see \cite{williams2018some} for a nice survey), including  the study of a fine-grained understanding of clique problems \cite{abboud2018if}.  Recently, \cite{ball2017average} explored one of the first fine-grained average case complexity results by using the random self-reducibility of low-degree polynomials to turn worst case fine-grained hardness results into average case results. However, it should be noted here that our techniques are quite different to these works. We rely on the fact that when we look at only a small fraction of our input, the loss of information makes it look indistinguishable from a problem that should not have any polynomial time algorithm. Further, \cite{goldreich2018counting,boix2021average,dalirrooyfard2020new,goldreich2020counting} studied the relation between worst-case and average-case hardness of clique problems.

More recently, \cite{feige2020finding,racz2019finding,alweiss2021subgraph,feige2021tight} have considered the problem of finding cliques in random graphs where the cost of the algorithm is the number of queries it makes to the adjacency matrix of the input graph. This is similar to our framework in that this quantity, the number of queries, plays a central role in both our algorithms and impossibility results. However, both of these works only bound the number of queries but allow unbounded computation time, while we only allow a sublinear amount of computation. In fact, our interest in the number of queries is simply a byproduct of this requirement. Following initial dissemination of this work, \cite{huleihel2021random} provided an algorithm that generalizes one of our results to work for the more general planted dense subgraph problem.

\section{Our techniques}

\subsection{Algorithms}
\label{sec:techn-algs}
All of our algorithms build on a simple
idea: once an algorithm has found slightly more than $\log n$ (say $2 \log n$)
clique vertices, we can efficiently (and with high probability of success) find all the other planted clique vertices.

Subroutines of this form are not new, and are well-known in the planted
clique literature~\cite[Lemma 2.9]{dekel2014finding}. However, we need our subroutine to run in time $\widetilde{O}(n)$ (which is optimal) and
without knowledge of the planted clique size. The clique completion lemma in \cite{dekel2014finding} both needs $k$ to be specified, and runs in time $\Omega(k^2)$, which could be $\widetilde{\Omega}(n)$, and so is unsuitable for our purposes.

We circumvent this and create a clique completion subroutine with the desired properties.  
Given $2 \log n$ vertices that are in the planted clique, every other planted clique vertex is connected to all these $2 \log n$ vertices and with high probability very few non-clique vertices are connected to all these $2 \log n$ initial clique vertices. Thus we can restrict our attention to only those vertices which are connected to all $2 \log n$ initial clique vertices. Call this set $V'$. There might, however, be some false positives in $V'$, because our input to this subroutine might be any (adversarially chosen\footnote{The property that our subroutine works correctly with an arbitrary input subset of the planted clique rather than only a uniformly random one is crucial when we use it in Algorithm~\ref{alg:subsample+filter} to solve the planted clique problem for cliques of size $o(\sqrt{n \log n})$.}) planted clique subset of the ${k \choose 2 \log n}$ possibilities\footnote{To gain intuition for why there are false positives, we can try a union bound argument to show that there are no false positives. For a given subset of the planted clique of size $2 \log n$, except with probability at most $\frac{1}{n^2}$, there will be no false positives. However, we then need to union bound over all ${k \choose 2 \log n}$ possible clique subsets, at which point such an anlysis breaks down.}. We simply select a random subset $S'_C$ of size $2 \log n$ from $V'$. With high probability, this subset $S'_C$ will contain only clique vertices, and then we run the same ``common neigbour'' procedure on this small subset. Note that now $S'_C$ is not just `some' (possibly adversarially chosen) subset of the planted clique, but is in fact a uniformly random subset. We can then utilise this randomness to show that with high probability no non-clique vertex is connected to all $2 \log n$ elements in $S'_C$. More details can be found in Algorithm~\ref{alg:cliquecompletion} and we prove a formal version of the following statement in Lemmas~\ref{lem:clique-completion-runtime} and \ref{lem:clique-completion-correctness}.

\begin{quote}
\textsc{Clique-Completion}
(Algorithm~\ref{alg:cliquecompletion}) takes a subset of the clique of
size $2\log{n}$ and, in running time $O(n\log{n})$, returns the planted clique
with high probability (as long as $k  = \omega \left( \log^2 n \right) $).

\end{quote}

\subsubsection{An $\widetilde{O}(n^{3/2})$ algorithm for
	finding cliques of size $k = \Theta(\sqrt{n \log n})$}
\begin{quote}
Theorem~\ref{thm:subsampledcounting} in Section~\ref{subsec:subsampledcounting}
formalises an algorithm
\textsc{Keep-High-Degree-And-Complete} (Algorithm~\ref{alg:subsampledcounting})
which runs in time $\widetilde{O}(n^{3/2})$ and finds the planted clique with high
probability of success as long as the clique size $k \geq C\sqrt{n\log{n}}$ for
a large enough constant $C$.
\end{quote}

The algorithm is elementary and follows from the same simple observations that led \cite{kuvcera1995expected} to give the first polynomial time algorithm for planted clique at the same threshold $k \geq C\sqrt{n\log{n}}$.
\begin{enumerate}
	\item The degree of each non-clique vertex is distributed as $\Binnhalf$.
	There are at most $n$ non-clique vertices, and with high probability the
	maximum of $n$ (possibly dependent) $\Binnhalf$ random variables is at
	most $\frac{n}{2} + c\sqrt{n\log{n}}$ for some constant $c>0$.
	\item With high probability, the degree of all the clique vertices will be
	larger than $\frac{n-k}{2} + k - c\sqrt{n \log n} = \frac{n}{2} +
	\frac{C}{2}\sqrt{n\log{n}} - c\sqrt{n \log n}$.
\end{enumerate}
If we choose $C$ large enough, simply computing the degree of a vertex (which takes time $O(n)$) lets us decide if the vertex is in the clique or not.
Since $k$ out of the $n$ vertices are in the clique, if we randomly
sample slightly more than $\frac{n}{k}$ vertices from $V$, we will get
at least $2 \log n$ clique vertices, and can identify them by computing the degree of all the vertices we have sampled, which takes time $\widetilde{O}\left( \frac{n^2}{k}\right)  = \widetilde{O}\left( n^{\frac{3}{2}}\right) $. Then we simply use the \textsc{Clique-Completion} subroutine to find the entire clique in a further $O(n \log n)$ time.\\

\begin{remark}\label{rem:robust}
	This algorithm works even if we only have an underestimate of the true planted clique size. This is because the algorithm only uses the clique size implicitly, when deciding how many vertices to sample. If we underestimate the clique size, we will only sample more vertices than necessary. This will increase the runtime, but will not affect the correctness of the output.
\end{remark}

\subsubsection{An $\widetilde{O}\left(\left(n/k\right)^3 + n\right)$ algorithm for
	finding cliques of size $k = \Omega(\sqrt{n \log n})$}
Theorem~\ref{thm:subsampledcounting} give a runtime of $\widetilde{O}\left(\frac{n^2}{k} +n \right)$ for $k = \Omega(\sqrt{n \log n})$. However, \textsc{Keep-High-Degree-And-Complete} was tailored for $k = \Theta\left(\sqrt{n\log n} \right)$. 
\begin{quote}
Theorem~\ref{thm:subsample+khdac} in Section~\ref{subsec:n3k3} shows that for larger $k$, we can improve the runtime to $\widetilde{O}\left( \left( \frac{n}{k}\right)^3 + n \right)$ using \textsc{Subsample-And-KHDAC}, Algorithm~\ref{alg:subsample+khdac}.
\end{quote}

Let $p<1$ be some parameter. If we select a random subset of $pn$  vertices (which can be done in time $O(pn)$), we expect to have $pn$ vertices of which $pk$ are planted clique vertices. If $pk = \Omega(\sqrt{pn \log{(pn)}	})$, we can run \textsc{Keep-High-Degree-And-Complete} on this smaller problem instance in time $\widetilde{O}\left(\left( pn\right) ^{\frac{3}{2}} \right)$ to recover $pk$ planted clique vertices. We can then just run \textsc{Clique-Completion} on a subset of them in time $O(n \log n)$. 
Observe that we need $p \approx \frac{n}{k^2}$ for this to work, 
resulting in a runtime of $\widetilde{O}\left( \left( \frac{n}{k}\right)^3 + n \right)$.

\subsubsection{An ${O}\left(n^2/\left(\frac{k^2}{n}\exp{\left(\frac{k^2}{24n}\right)}\right)\right)$ algorithm for finding cliques of size $\omega(\sqrt{n}) = k = o(\sqrt{n \log n})$}

\begin{quote}
Theorem~\ref{thm:subsample+filter} analyses \textsc{Subsample-And-Filter} (Algorithm~\ref{alg:subsample+filter}) and shows that even when $\omega(\sqrt{n})  = k = o(\sqrt{n \log n})$, degrees do help solve the planted clique recovery problem in sublinear time ${O}\left(n^2/\left(\frac{k^2}{n}\exp{\left(\frac{k^2}{24n}\right)}\right)\right)$.
\end{quote}
However, our algorithm  is not `truly sublinear'. That is, it does not have running time $O(n^{2-\epsilon})$ for any constant $\epsilon>0$ . We leave the question of devising a `truly sublinear' algorithm for finding the planted clique when $k = o(\sqrt{n \log n})$ as a compelling open problem.

The reason degree counting (as in \cite{kuvcera1995expected} and Theorem~\ref{thm:subsampledcounting}) works for finding planted cliques of size $\Omega(\sqrt{n \log n})$ is because there exists a clear separation between the degree of clique vertices and non-clique vertices. If we see a vertex that has degree close to $\frac{n+k}{2}$, we know it is in the clique, and if the degree is much lesser than $\frac{n+k}{2}$ (even if it is much larger than $\frac{n}{2}$), we know it is not in the clique. The situation changes when $\omega(\sqrt{n}) = k = o(\sqrt{n \log n})$. A vertex with degree close to (or even much larger than) $\frac{n+k}{2}$ may be a non-clique vertex.

However, all is not lost. Given a clique vertex, its degree is very likely to be close to its expectation of $\frac{n+k}{2}$. On the other hand, given a non-clique vertex, its degree is much less likely to be close to $\frac{n+k}{2}$, even though this likelihood is not as small as in the case of $k = \Omega(\sqrt{n \log n})$. This suggests that we filter out vertices based on this closeness criterion.

We subsample an i.i.d $p$ fraction of the vertices (this can be done in time $O(n)$), and then compute the degree of each of these (approximately) $pn$ vertices. This takes time $O(pn^2)$. We then throw away all vertices that are not within $O(\sqrt{n})$ of $\frac{n+k}{2}$. This will boost the ratio of clique to non-clique vertices because of the discussion above. If we choose $p$ large enough so that at the end of this process we get $n'$ vertices in all, out of which $k'$ are planted clique vertices, and $k' = \Omega(\sqrt{n'})$, then we can use any existing linear time solver to finding planted cliques on this smaller problem instance. This takes time $O({n'}^2) = O(p^2n^2) = O(pn^2)$. The specific linear time solver we use is the \textsc{Low-Degree-Removal} algorithm of \cite{feige2010finding}. This is because we do not have access to $k'$, and unlike the algorithms of \cite{dekel2014finding,deshpande2015}, \textsc{Low-Degree-Removal} has the nice property that it does not need the size of the planted clique as input. We can then use \textsc{Clique-Completion} as a final step to find all the clique vertices in the original problem. As we see during the analysis, it suffices to take $p = {\Theta}\left(\frac{n}{k^2}\exp\left(-\frac{k^2}{24n} \right)  \right)$.

For technical reasons, Algorithm~\ref{alg:subsample+filter} is actually slightly different from the sketch described above. We first split the vertices of the input graph into two disjoint sets $V_1$ and $V_2$ of equal size $n/2$. We then subsample vertices from $V_1$, and use their $V_2$-degree to filter them. By $V_2$-degree we mean that we estimate their degree by only counting the number of edges from a vertex in $V_1$ to all the vertices in $V_2$. The advantage now is that when we take our filtered vertices (which are a subset of $V_1$) and consider the subgraph induced by them, we have not seen any of the edges in this subgraph. Thus we can use the randomness of these edges to argue that this subgraph is an instance of the planted clique problem and can invoke \cite{feige2010finding} to analyse the performance of \textsc{Low-Degree-Removal} on this subgraph.\\

\begin{remark}\label{rem:non-robust}
	In contrast to the behaviour of Algorithm~\ref{alg:subsampledcounting} noted in Remark~\ref{rem:robust}, it is unlikely that Algorithm~\ref{alg:subsample+filter}, \textsc{Subsample-And-Filter}, is very robust to misspecification of the size of the planted clique. This is because we seem to be using this size crucially in our filtering step. The algorithm needs an estimate of $k$ that has additive error at most $o(\sqrt{n})$.
\end{remark}

\subsection{Impossibility results}
\label{sec:techn-lb}

A sublinear time algorithm can only look at a subset of entries in the adjacency matrix $A_G$ of the graph. For formal definitions of the graph ensembles, problems, conjectures and model of computation used here, see Section~\ref{sec:notation}. Essentially our model is that the input is presented to the algorithm via the adjacency matrix of the graph, and we assume that querying any entry of this matrix takes unit time. Since accessing an entry of the input takes unit time, if an algorithm runs in time $T(n)$, it can observe at most $O(T(n))$ entries of the input adjacency matrix (Remark~\ref{rem:cantseeall}). 
\\
\begin{definition}[Query set of an algorithm]\ \\
	\label{def:query-set}
	Let $\A$ be any algorithm that takes as input $A_G$, the adjacency matrix of a graph $G= (V,E)$. Define $E_{\A} \subset V \times V$ as the set of entries of $A_G$ that $\A$ queries before it terminates. Since $A_G$ is symmetric, we assume for convenience that $E_{\A}$ is symmetric. That is, if $\A$ queries $(i,j)$, it also queries $(j,i)$\footnote{Obviously, since $A_G$ is symmetric, if $\A$ queries $(i,j)$ there is no need to query $(j,i)$. However, doing so can increase the number of queries (and hence the runtime) by at most a factor of $2$. Since it is convenient to assume that the set $E_{\A}$ is symmetric rather than tracking which of the two queries the algorithm made, we simply assume the algorithm queries both options.}.
\end{definition}

The following simple fact about sublinear time algorithms follows immediately from the fact that in our model of computation any query to an entry of the adjacency matrix $A_G$ takes unit time.\\
\begin{remark}
	\label{rem:cantseeall}
	If $\left\{\A_n \right\}$ is an algorithmic family that runs in time $T_{\A}(n)$, then $\lvert E_{\A_n} \rvert = O( T_{\A}(n))$.
\end{remark}

Our lower bounds all use Remark~\ref{rem:cantseeall}. If the algorithm doesn't see ``enough'' entries, it must work without a fair chunk of information about the input. Without this information, what it \textit{does} see is either statistically (in results that follow immediately from \cite{racz2019finding}) or computationally (because of the $\PCC$) not solvable. Remark~\ref{rem:cantseeall} then implies that if an algorithm has a high success probability, it must have a ``large enough'' running time. In this sense, we convert a polynomial vs superpolynomial hardness gap to a fine-grained (in fact sublinear) hardness gap. 

\subsubsection{Information theoretic impossibility result}

The recent work \cite{racz2019finding} that completely characterizes the cost of the planted clique problem in the setting where the cost is measured in terms of the number of queries made by the algorithm to the adjacency matrix of the input graph, and computation is free. Stated in our notation, Theorem 2 in \cite{racz2019finding} says that if an algorithm $\A_n$ has number of queries $\lvert E_{\A_n} \rvert = o\left(\frac{n^2}{k^2} + n \right)$, then $\A_n$ must fail to recover the planted clique with probability tending to $1$. Combining this with Remark~\ref{rem:cantseeall}, we get\\

\begin{proposition}\cite[Theorem 2]{racz2019finding}
	\label{rem:info-theory-lb}
	Let $k(n) \leq n$ and let $\left\{G_n \right\}$ be a sequence of instances of the planted clique recovery problem ${\sf{PC}_R}(n,k(n))$ (Definition~\ref{def:pc-recovery}).
	Any algorithmic family $\left\{\A_n\right\}$ that runs in time $T_{\A}(n) = o\left(\frac{n^2}{(k(n))^2} + n \right)$ must fail to output the correct planted clique with probability at least $1-o(1)$.
\end{proposition}

While this provides a tight lower bound for cliques of size $k = \Omega(n^{\frac{2}{3}})$, it is quite far from our algorithmic upper bound of $\widetilde{O}(n^{\frac{3}{2}})$ for cliques of size $\Theta(\sqrt{n \log n})$ by only providing an $\Omega(n)$ lower bound. However, Theorem 2 in \cite{racz2019finding} also shows that lower bound techniques relying \emph{purely} on analysing the number of queries required will fail to give better lower bounds. This is because there exists an inefficient algorithm making as few as $\widetilde{O}\left(\frac{n^2}{k^2} + n \right)$ queries that can find the planted clique with good probability. Hence we  we resort to using computational hardness assumptions to show stronger lower bounds. The most natural average case computational hardness assumption in this scenario is the \PCC~(Conjecture~\ref{conj:pc}).

An algorithm for recovery can be easily converted into an algorithm for detection simply by picking some large enough random subset of the output of the algorithm (say $3 \log n$) and checking (in time $O(\log^2 n)$) if all the vertices are connected to each other. So we focus on the detection problem when proving the rest of our lower bounds, since this immediately translates into a recovery lower bound. We show that the (non-)existence of really good sublinear time algorithms for planted clique detection in the `easy' regime seems connected to the (non-)existence of very fast (not necessarily sublinear) algorithms for the detection problem in the `hard' regime.

We show this connection using a slightly different notion of a planted clique problem which we call ${\sf{iidPC_D}}(n,\frac{k}{n})$ (Definition~\ref{def:iid-pc-detection}). In this model, each vertex is included in the clique independently with probability $\frac{k}{n}$. We use this as a proxy to study the vanilla planted clique detection problem ${\sf{PC_D}}(n,k)$ (Definition~\ref{def:pc-detection}) in which a uniformly random set of $k$ vertices form the clique. Due to their similarity, impossibility results for one model give evidence for impossibility theorems in the other. For now we simply state the results we obtain, and defer in-depth discussion of the relation to Sections~\ref{sec:algs-work-all-models} and \ref{sec:lb-iidp-to-pc}.

\subsubsection{Lower bounds for detecting planted cliques of size close to information theoretic threshold from sublinear lower bounds for detection at clique size $k = \widetilde{\Theta}(\sqrt{n})$}

Consider the ${\sf{iidPC}_D}(n,\frac{k}{n})$ problem with $k$ just larger than $\sqrt{n \log^2 n}$. Create a subgraph by only retaining the first $\sqrt{n}$ vertices. Then we have a graph of size $\sqrt{n}$ with a planted clique of size slightly more than $2 \log (\sqrt{n}) $, the information theoretic threshold.

Hence if we could solve the detection problem on a graph of size $n$ with a planted clique near the information theoretic threshold in time $O\left(n^{2+2\delta}\right)$ (for any constant $\delta >0$), then we could solve the original problem in time $\widetilde{O}\left(n^{1+\delta} \right) $. A lower bound on the original problem then translates into a lower bound on the problem at the information theoretic threshold. Moreover, a lower bound of the form $\omega(n)$ would imply a non-trivial superlinear lower bound for detecting small cliques. This indicates that a lower bound of the form $\omega(n)$ will require computational hardness assumptions to show.

We prove the following more general reduction, which yields the discussion above by setting $g(n_1) = \log n_1$.\\
\begin{lemma}
\label{thm:backwards-lb}
Let $0 < \delta < \frac{1}{2}$ be some constant. Let $\omega(1) = g(n_1) = o(\sqrt{n_1})$ be some sequence indexed by $n_1$ and define $k_1(n_1) = g(n_1)\sqrt{n_1}$. Suppose that any algorithmic family $\{A_{n_1}\}$ that attempts to solve ${\sf{iidPC_D}}(n_1,\frac{k_1(n_1)}{n_1})$ in time $T_1(n_1) = {O}\left(n_1^{1+\delta} \right)$ has probability of success at most $\frac{1}{2}+o(1)$. Let $k_2(n_2) = g(n_2^2)$. Then any algorithmic family $\{\A_{n_2}\}$ that attempts to solve ${\sf{iidPC_D}}(n_2,\frac{k_2(n_2)}{n_2})$ in time $T_2(n_2) = O(n_2^{2+2\delta})$ has probability of success at most $\frac{1}{2}+ o(1)$.
\end{lemma}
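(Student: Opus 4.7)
The plan is a direct reduction via restriction to a random-looking subgraph. Suppose, for contradiction, that there exists an algorithmic family $\{\A_{n_2}\}$ that solves ${\sf{iidPC_D}}(n_2,\frac{k_2(n_2)}{n_2})$ in time $T_2(n_2)=O(n_2^{2+2\delta})$ with success probability strictly bounded away from $\frac{1}{2}$ in the limit. I will use $\{\A_{n_2}\}$ to construct a family $\{\A'_{n_1}\}$ solving ${\sf{iidPC_D}}(n_1,\frac{k_1(n_1)}{n_1})$ in time $T_1(n_1)=O(n_1^{1+\delta})$ with the same (limiting) success probability, contradicting the hypothesis.

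The reduction is simple: given an input graph $G$ on $n_1$ vertices, let $n_2=\lfloor\sqrt{n_1}\rfloor$, let $G'$ be the subgraph of $G$ induced by the first $n_2$ vertices, and define $\A'_{n_1}(G):=\A_{n_2}(G')$. The crucial distributional identity, which I will verify, is that $G'$ is exactly distributed as an instance of ${\sf{iidPC_D}}(n_2,\frac{k_2(n_2)}{n_2})$ under both hypotheses. Indeed, under ${\sf H}_0$, restricting $G(n_1,\tfrac12)$ to any fixed vertex subset of size $n_2$ yields $G(n_2,\tfrac12)$. Under ${\sf H}_1$, each vertex of $G$ is independently placed in the clique with probability $p_1=\frac{k_1(n_1)}{n_1}=\frac{g(n_1)}{\sqrt{n_1}}$; in particular each of the first $n_2$ vertices is independently in the clique with the same probability. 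Since $n_1=n_2^2$ (up to the floor), we get
\[
p_1 \;=\;\frac{g(n_2^2)}{n_2}\;=\;\frac{k_2(n_2)}{n_2},
\]
so $G'$ has exactly the iid planted-clique distribution at parameter $\frac{k_2(n_2)}{n_2}$. Therefore the success probability of $\A'_{n_1}$ on instances of ${\sf{iidPC_D}}(n_1,\frac{k_1(n_1)}{n_1})$ is exactly the success probability of $\A_{n_2}$ on instances of ${\sf{iidPC_D}}(n_2,\frac{k_2(n_2)}{n_2})$.

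For the runtime, note that building $G'$ requires no queries: the algorithm $\A'_{n_1}$ simply runs $\A_{n_2}$ and forwards each of its queries on $G'$ to the corresponding query on $G$, with unit-time overhead per query in our model of computation. Hence
\[
T_{\A'_{n_1}}(n_1)\;=\;O(T_2(n_2))\;=\;O\bigl((\sqrt{n_1})^{2+2\delta}\bigr)\;=\;O\bigl(n_1^{1+\delta}\bigr)\;=\;T_1(n_1),
\]
so $\A'_{n_1}$ falls within the runtime budget assumed in the hypothesis. Combined with the distributional identity, this furnishes a family solving ${\sf{iidPC_D}}(n_1,\frac{k_1(n_1)}{n_1})$ in time $O(n_1^{1+\delta})$ with success probability exceeding $\tfrac{1}{2}+o(1)$, contradicting the hypothesis. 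The assumption on $\A_{n_2}$ must therefore be false, and any such family has success probability at most $\tfrac12+o(1)$, as claimed.

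I do not anticipate a serious obstacle: the mildly fiddly piece is the minor mismatch introduced by the floor in $n_2=\lfloor\sqrt{n_1}\rfloor$, which I handle by noting that $g(\lfloor\sqrt{n_1}\rfloor^2)$ and $g(n_1)$ agree up to indexing along the subsequence $n_1=n_2^2$, so it suffices to consider the reduction only along perfect squares $n_1$ (asymptotic statements at other $n_1$ follow by monotonicity in the budget). The hypotheses $g=\omega(1)$ and $g=o(\sqrt{\cdot})$ ensure that $k_2(n_2)\to\infty$ and that $p_1\le 1$, so both instances are well-defined throughout the argument.
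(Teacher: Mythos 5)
Your proposal is correct and follows essentially the same route as the paper's proof: a contradiction argument that restricts an ${\sf{iidPC_D}}(n_1,\frac{k_1}{n_1})$ instance to its first $\sqrt{n_1}$ vertices, observes that the iid planting makes the induced subgraph exactly an ${\sf{iidPC_D}}(n_2,\frac{k_2}{n_2})$ instance since $\frac{k_1}{n_1}=\frac{k_2}{n_2}$, and converts the runtime $O(n_2^{2+2\delta})=O(n_1^{1+\delta})$. Your explicit treatment of the floor and the restriction to the subsequence of perfect squares is a minor tidying of a detail the paper elides, not a different argument.
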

\begin{proof}
Assume towards a contradiction that there exists an algorithmic family $\A_{n_2}$ that runs in time $T_2(n_2) = O(n_2^{2+2\delta})$ and achieves probability of success at least $p_0$ (for some constant $p_0 > \frac{1}{2}$) when solving ${\sf{iidPC_D}}(n_2,\frac{k_2}{n_2})$. Suppose we are given an instance $G$ of ${\sf{iidPC_D}}(n_1,\frac{k_1}{n_1})$ to solve, and consider the following algorithm. Set $n_2 = \sqrt{n_1}$, pick out the first $n_2$ vertices of the graph $G$, and call the induced subgraph (which we don't have to compute, just provide access to) $G'$. Note that we have set $n_2$ so that $\frac{k_1}{n_1} = \frac{k_2}{n_2}$. The definition of ${\sf{iidPC_D}}$ implies that $G'$ is an instance of ${\sf{iidPC_D}}(n_2,\frac{k_1}{n_1})$ hence also an instance of ${\sf{iidPC_D}}(n_2,\frac{k_2}{n_2})$. We can then use $\A_{n_2}$ to solve it in time $T_2(n_2) = O(n_2^{2+2\delta}) = O(n_1^{1+\delta})$ with success probability at least $p_0$. This gives an algorithmic family that runs in time $T_1(n_1) = O(n_1^{1+\delta})$ and solves ${\sf{iidPC_D}}(n_1,\frac{k_1}{n_1})$ with success probability at least $p_0 > \frac{1}{2}$. This provides the desired contradiction and completes the proof.
\end{proof}

\subsubsection{Sublinear time lower bounds for detecting cliques of size $k = \widetilde{\Theta}(\sqrt{n})$ from the Planted Clique Conjecture}

We have seen that strong lower bounds on the detection problem at clique sizes near $k = \widetilde{\Theta}(\sqrt{n})$ imply non-trivial lower bounds for the detection problem at the information theoretic threshold. In this section we show that this connection between sublinear time algorithms for large cliques and polynomial time algorithms for small cliques goes both ways. If the latter is hard (as codified by the \PCC), we provide some evidence that the former is hard too.

However, our results in this direction are weaker than we would like, and only hold hold for a (reasonable) subclass of all algorithms, namely \textit{non-adaptive rectangular} algorithms. For this subclass, we show that our algorithmic upper bounds have essentially optimal running time. Non-adaptive algorithms are those where the algorithm (possibly using randomness) fixes the entries of the adjacency matrix $A_G$ to query before it begins querying $A_G$. Thus the locations of queries do not depend on the entries of $A_G$. Rectangular algorithms are those in which the set of queries form a combinatorial rectangle (modulo symmetry).

For the two definitions that follow, let $\A$ be an algorithm that takes as input the adjacency matrix $A_G$ of a graph $G = (V,E)$, and $E_{\A}$ be the (symmetric) set of queries it makes to $A_G$ as defined in Definition~\ref{def:query-set}.\\
 
\begin{definition}[Non-adaptive algorithm]\ \\
	\label{def:nonadaptive-alg}
	 $\A$ is non-adaptive if the random variable $E_{\A}$ is independent of the random variable $A_G$.\\
\end{definition}

\begin{definition}[Rectangular algorithm]\ \\
	\label{def:rect-alg}
	$\A$ is rectangular if the random variable  $E_{\A}$ is defined using two random disjoint subsets of vertices $I,J \subset V$ with $I \cap J = \emptyset$ as follows.
	$E_{\A} = \left\{(u,v): u \in J , v \in J, u \neq v \right\} \cup \left\{(u,v): u \in I , v \in J \right\} \cup \left\{(u,v): u \in J, v \in I  \right\}$. See Figure~\ref{fig:rect} for an illustrative example.\\
\end{definition}

\begin{figure}[]
	
	\centering
	\includegraphics[scale=0.5]{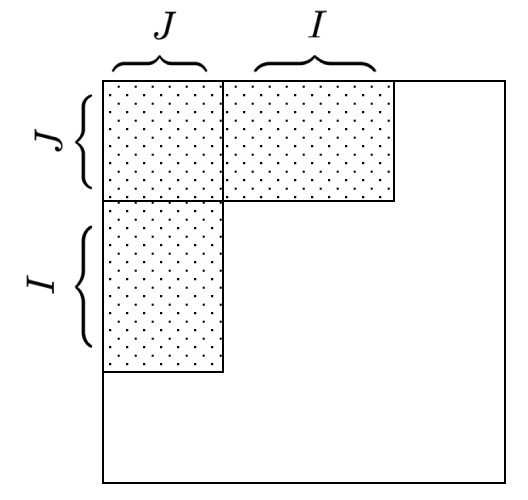}
	\caption{An example of the query set $E_{\A}$ for a rectangular algorithm as defined in Definition~\ref{def:rect-alg}}
	\label{fig:rect}
\end{figure}

\begin{remark}
\label{rem:detec-ub-NA-rect}
While we would like to handle more generic algorithms, restricting our lower bounds to \textit{non-adaptive rectangular} algorithms is not entirely unreasonable. This is because our detection algorithms are non-adaptive and rectangular\footnote{The \textsc{Clique-Completion} subroutine is the only \textit{adaptive} or \textit{non-rectangular} part of Algorithms~\ref{alg:subsampledcounting} ~\ref{alg:subsample+khdac}, or ~\ref{alg:subsample+filter}. Moreover, \textsc{Clique-Completion} is only required for the planted clique recovery problem. A simple tweak to Algorithm~\ref{alg:subsampledcounting} so that it does not use \textsc{Clique-Completion}, but only decides whether or not a planted clique exists based on the largest degree it observes can give a \textit{non-adaptive rectangular} detection algorithm that runs in time $\widetilde{O}(n^{\frac{3}{2}})$. Similarly, removing the \textsc{Clique-Completion} subroutine from Algorithm~\ref{alg:subsample+khdac} while using the modified version of Algorithm~\ref{alg:subsampledcounting} inside it gives a \textit{non-adaptive rectangular} detection algorithm that runs in time $\widetilde{O}(\frac{n^3}{k^3})$ and reliably detects cliques of size $k = \Omega\left(\sqrt{n \log n} \right) $. Similar reasoning also works for Algorithm~\ref{alg:subsample+filter} where we can use a linear time detection algorithm such as edge-counting instead of the \cite{feige2010finding} algorithm.}. Since we are showing lower bounds for the detection version of the problem, our upper bound algorithms do indeed belong to the class of algorithms against which we are showing lower bounds.
\end{remark}

\textbf{Intuition for impossibility result for non-adaptive algorithms:}

Let $\delta>0$ be some constant and consider the detection problem with planted cliques of size $\widetilde{\Theta}(\sqrt{n})$. By a standard argument, we can first use the non-adaptivity of the algorithm to show that we need to only consider algorithms whose queries are deterministic and not randomized. See the proof of Theorem~\ref{thm:n32-lb} for an elaboration of this argument. If the algorithm runs in time $O(n^{\frac{3}{2}-\delta})$ (so even slightly better than our algorithmic upper bound), it can (deterministically) query at most $O(n^{\frac{3}{2}-\delta})$ entries of the adjacency matrix. Under the randomness of the location of the planted vertices, each off-diagonal entry in the adjacency matrix corresponds to a `planted' entry with probability roughly $k^2/n^2 = \widetilde{O}(1/n)$. By linearity of expectation, the expected number of queries the algorithm makes which are `planted' entries is $\widetilde{O}(n^{\frac{1}{2}-\delta})$. This means that we expect the algorithm to obtain evidence of `plantedness' from roughly only $\widetilde{O}(n^{\frac{1}{2}-\delta})$ vertices. According to the $\PCC$, if there are such few planted vertices, it is computationally hard to distinguish between the planted and null models. Thus we might believe that solving the original problem is also computationally hard if we query such few entries. However, we have only been able to leverage this intuition into a formal reduction for rectangular algorithms.\\

\begin{theorem}
	\label{thm:n32-lb}
	Assume the \emph{Planted Clique Conjecture} (Conjecture~\ref{conj:pc}) holds. Let $\delta > 0$ be any constant. Let $k(n)$ be any sequence such that $k(n) = \Omega(\sqrt{n})$. Then if any non-adaptive rectangular algorithmic family $\{A_n\}$ tries to solve ${\sf{iidPC_D}}(n,\frac{k(n)}{n})$ in time $O\left(\frac{n^{3-\delta}}{(k(n))^3}\right)$, it has probability of success at most $\frac{1}{2} + o(1)$.
\end{theorem}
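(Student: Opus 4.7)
The plan is a proof by contradiction using the iid Planted Clique Conjecture (Conjecture~\ref{conj:iidpc}), which follows from the \PCC~by Remark~\ref{rem:pcc-to-iidpcc}. Suppose for contradiction that some non-adaptive rectangular family $\{\A_n\}$ runs in time $T(n) = O(n^{3-\delta}/k(n)^3)$ and achieves success probability at least $\frac{1}{2} + \epsilon$ on ${\sf{iidPC_D}}(n, k/n)$ for a constant $\epsilon > 0$. By Remark~\ref{rem:deterministic-suffices} we may assume $\A_n$'s query rectangle $(I_n, J_n)$ is deterministic; without loss of generality $|J_n| \leq |I_n|+|J_n|$. Counting queries gives $|J_n| \cdot (|I_n|+|J_n|) = O(T(n))$. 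The proof then splits on $|J_n|$ around the threshold $n/(k \log n)$.

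\textbf{Case A: $|J_n| \leq n/(k\log n)$.} I show $\A_n$'s view under $H_1$ is statistically close to its view under $H_0$. Letting $P$ denote the random planted set, a union bound gives $\Pr[J_n \cap P \neq \emptyset] \leq |J_n|k/n \leq 1/\log n$. Conditioned on $J_n \cap P = \emptyset$, every queried pair $(u,v)$ has $v \in J_n$ unplanted, hence the entry is an independent $\mathrm{Ber}(1/2)$; so the joint distribution of queried entries matches $H_0$ exactly. A coupling then yields $|\Pr_{H_1}[\A_n = 1] - \Pr_{H_0}[\A_n = 1]| = O(1/\log n) = o(1)$, forcing success probability $\leq \frac{1}{2} + o(1)$ and contradicting $\epsilon > 0$.

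\textbf{Case B: $|J_n| > n/(k\log n)$.} Set $m := |I_n|+|J_n|$, so the query constraint gives $m = O(T(n)/|J_n|) = O(n^{2-\delta}\log n / k^2)$. I build a polynomial-time algorithm $\A'_m$ for ${\sf{iidPC_D}}(m, k/n)$: on input graph $G'$ with $m$ vertices, identify $V(G')$ with $I_n \cup J_n$, simulate $\A_n$ by answering each (predetermined) query using $G'$ (every query lies in $I_n \cup J_n$), and return $\A_n$'s output. The key distributional observation is that the restriction of an ${\sf{iidPC_D}}(n, k/n)$ instance (resp.\ an $H_0$ instance) to the vertex set $I_n \cup J_n$ is distributed exactly as ${\sf{iidPC_D}}(m, k/n)$ (resp.\ $H_0$ on $m$ vertices), so $\A'_m$ inherits the success probability of $\A_n$. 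Its runtime $T(n) \leq (n/k)^3 \leq (m \log n)^3$ is polynomial in $m$ after verifying $\log n = O_\delta(\log m)$, which follows from $m \geq n/(k\log n)$ together with the (WLOG) bound $k \leq n^{1-\delta/3}$ needed to keep the theorem statement non-vacuous. Finally, the clique-size parameter $mk/n$ in the reduced instance satisfies $mk/n = O(m^{1/2-\delta'})$ for some constant $\delta' = \delta'(\delta) > 0$, obtained from a short logarithmic inequality combining $m \leq n^{2-\delta}\log n/k^2$ with $k \geq \sqrt{n}$. This contradicts Conjecture~\ref{conj:iidpc}.

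The main obstacle is the parameter bookkeeping in Case B: the threshold $n/(k\log n)$ must be tuned so that simultaneously (i) the reduced instance size $m$ is large enough to give $T(n) = \mathrm{poly}(m)$, making the reduction truly polynomial-time, and (ii) the reduced clique size $mk/n$ lies strictly inside the iidPCC-hard regime $O(m^{1/2-\delta'})$ for a single constant $\delta' > 0$ working uniformly over all $k = k(n) = \Omega(\sqrt{n})$. Condition (ii) reduces, after substitution, to $\delta'(4 - 2\delta - 4\log k/\log n) \leq \delta$, which admits a positive $\delta'$ (e.g., any $\delta' < \delta/(2(1-\delta))$) once one checks the worst case $k \asymp \sqrt{n}$. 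Case A mirrors this: below the threshold, the total signal probability $|J_n|k/n$ is vanishing, so the algorithm receives no signal at all with probability $1-o(1)$.
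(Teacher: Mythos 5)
Your proposal is correct and follows essentially the same route as the paper's proof: reduce to a deterministic query rectangle via non-adaptivity, split into a case where $J$ misses the planted set with high probability (giving statistical indistinguishability from $H_0$) and a case where the algorithm is repurposed, by restriction to $I \cup J$, as a polynomial-time solver of ${\sf{iidPC_D}}$ on a smaller instance whose clique parameter is $O(m^{1/2-\delta'})$, contradicting the iid Planted Clique Conjecture obtained from the \PCC. The only differences are cosmetic: you split on $\lvert J\rvert$ at threshold $n/(k\log n)$ while the paper splits on $\lvert I\rvert+\lvert J\rvert$ at $n^{2-\delta/2}/k^2$ (which lets it avoid your auxiliary restriction on very large $k$), and you spell out the polynomiality-in-$m$ and $\delta'$ bookkeeping somewhat more explicitly than the paper does.
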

\begin{proof}
The key idea is that the only evidence for a planted clique comes from querying entries for which both vertices are in the planted clique. Fix $n$ and let $k$ denote $k(n)$. Let $\delta>0$ be a constant, and assume towards a contradiction that $\A_n$ is non-adaptive rectangular algorithm that runs in time $O(\frac{n^{3-\delta}}{k^3})$
and solves ${\sf{iidPC_D}}(n,\frac{k}{n})$ with success probability $p_0 > \frac{1}{2}$.
Without loss of generality, we can assume that the query set $E_{\A_n}$, and hence $I$ and $J$, are deterministic\footnote{While the definition of non-adaptive algorithms allows the query set $E_{\A}$ to be randomized, for the sake of proving lower bounds it actually suffices to consider only query sets that are deterministically fixed before the input is provided. This is because the probability of success of the algorithm is an expectation of the success probabilities under each of the random choices of $E_{\A}$ (this is where we use the fact that the algorithm is non-adaptive). This means that there exists at least one choice $E_{\A}$ which achieves the probability of success that the randomized algorithm is guaranteed to achieve. This choice, which can be pre-computed, provides a deterministic algorithm that does at least as well as the randomized non-adaptive algorithm we started with.}. By Remark~\ref{rem:cantseeall}, the number of queries the algorithm makes can not be too big: $\lvert E_{\A_n} \rvert  = O(\frac{n^{3-\delta}}{k^3})$. Hence, $(\lvert I \rvert + \lvert J \rvert)  \lvert J \rvert  = O(\lvert E_{\A_n} \rvert) =  O(\frac{n^{3-\delta}}{k^3})$.
	
We will consider two cases. First, consider the simpler case, where $\lvert J \rvert = O(\frac{n^{1-\frac{\delta}{2}}}{k})$. Since $J$ is so small, with high probability (at least $1-o(1)$) over the randomness of the clique vertices, there are no planted vertices in $J$, that is, $J \cap K = \emptyset$. In this scenario, the distribution of the entries of $E_{\A_n}$, which is what the algorithm sees, will be identical under both $G(n,\frac{1}{2})$ and $\hat{G}(n,\frac{1}{2},\frac{k}{n})$ since our query set will have no planted edges. Hence no algorithm will be able to distinguish between these two cases with success probability greater than $\frac{1}{2}+o(1)$. In this case, the conclusion to our theorem immediately follows.

Now consider the case where $\lvert J \rvert = \Omega(\frac{n^{1-\frac{\delta}{2}}}{k})$, so we must have $\lvert I \rvert + \lvert J \rvert = O(\frac{n^{2-\frac{\delta}{2}}}{k^2})$. The idea here is to use $\A_n$ to solve ${\sf{iidPC_D}}(n',\frac{k'}{n'})$ for 
$k'=o(\sqrt{n'})$---an intractable problem according to the \emph{Planted Clique Conjecture} and Remark~\ref{rem:pcc-to-iidpcc}, hence getting a contradiction.
	
Assume without loss of generality $I\cup J = \{1,\dots,n'\}$ and let $k'$ be such that $\frac{k'}{n'} = \frac{k}{n}$. As $n' = O(\frac{n^{2-\frac{\delta}{2}}}{k^2})$ we get that $k' = O(n'^{\frac{1}{2}-\frac{\delta}{4}})$. Consider an instance $G'$ of ${\sf{iidPC_D}}(n',\frac{k'}{n'})$. We claim that running $\A_n$ on $G'$ succeeds with high probability in detecting cliques. To prove this claim, we notice that $\A_n$ can access its input graph $G'$ only	through $E_{\A_n(G')}$. Moreover, if $G'$ is a null instance (no planted clique), then $E_{\A_n(G')} \stackrel{d}{=} E_{\A_n(G)}$ (identically distributed) where $G$ is a null instance of	${\sf{iidPC_D}}(n,\frac{k}{n})$. And if $G'$ has a planted clique, then our choice of $\frac{k'}{n'} = \frac{k}{n}$ implies that $E_{\A_n(G')} \stackrel{d}{=} E_{\A_n(G)}$ where $G$	is an instance of ${\sf{iidPC_D}}(n,\frac{k}{n})$ with planted clique. Overall, our assumption then implies that $\A_n$ succeeds in solving ${\sf{iidPC_D}}(n',\frac{k'}{n'})$ with probability $p_0 > \frac{1}{2}$ which contradicts the $\emph{iid Planted Clique Conjecture}$.
\end{proof}

\section{On the relationship between the different planted clique variants}

\subsection{Our algorithms are robust to how the planted clique problem is formalized}
\label{sec:algs-work-all-models}

The iid variant ${\sf{iidPC}_D}(n,\frac{k}{n})$  (Definition~\ref{def:iid-pc-detection}) is formally different to but morally similar to $\PCD$. Having stated some impossibility results with ${\sf{iidPC}_D}(n,\frac{k}{n})$ rather than $\PCD$, we note here that our algorithms (or minor tweaks thereof) actually work for this different model too. This lends some credence to showing impossibility in these models as proxies for showing impossibility results for $\PCD$ or $\PCR$.\\
\begin{fact}
	\label{fact:iidPC-basically-PC}
	Let $k(n) \leq n$ and $\omega(1) = f(n) = o(\sqrt{n})$ be any sequence. With probability at least $1-o(1)$, an instance of ${\sf{iidPC}_D}(n,\frac{k(n)}{n})$ is an instance of ${\sf{PC}_D}(n,k'(n))$ for some sequence $k'(n)$ satisfying $\lvert k'(n) - k(n) \rvert \leq f(k(n)) \sqrt{k(n)}$.\\
\end{fact}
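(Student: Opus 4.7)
The plan is to exploit the obvious coupling between the two models. Under the alternative ${\sf H}_1$ of ${\sf{iidPC_D}}(n, k/n)$, the planted set $K$ is constructed by including each vertex independently with probability $k/n$, so $X := |K| \sim {\sf Bin}(n, k/n)$. Conditional on $X = k'$, the set $K$ is uniformly distributed over the $\binom{n}{k'}$ size-$k'$ subsets of $V$ by symmetry of the iid construction; therefore the graph, conditional on $X = k'$, is distributed exactly as $\G(n, \frac{1}{2}, k')$, which is precisely the ${\sf H}_1$ distribution of ${\sf{PC_D}}(n, k')$. Thus a proof reduces to showing that $X$ concentrates around $k$ within the prescribed window.

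For the concentration step, since $\E[X] = k$ and $\var(X) = k(1 - k/n) \leq k$, Chebyshev's inequality yields
\begin{equation*}
\Prob\bigl(|X - k| > f(k)\sqrt{k}\bigr) \leq \frac{1}{f(k)^{2}}.
\end{equation*}
Since $f(m) = \omega(1)$ as $m \to \infty$ and we are in the regime $k = k(n) \to \infty$, this probability is $o(1)$. Setting $k'(n) := X$, on the high-probability event $\{|X - k| \leq f(k)\sqrt{k}\}$ the sampled graph is simultaneously an instance of ${\sf{iidPC_D}}(n, k/n)$ and of ${\sf{PC_D}}(n, k'(n))$, with the required bound $|k'(n) - k(n)| \leq f(k(n))\sqrt{k(n)}$.

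There is essentially no technical obstacle here: the claim is the confluence of a one-line symmetry observation and a standard variance bound for the Binomial. The only point worth flagging is the tacit assumption $k(n) \to \infty$, without which $f(k(n))$ need not diverge and the concentration can break down; this is in any case the regime in which the planted clique problem is non-trivially posed, so the restriction is harmless for the downstream applications in Sections~\ref{sec:algs-work-all-models} and \ref{sec:lb-iidp-to-pc}.
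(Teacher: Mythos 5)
Your proof is correct and matches the reasoning the paper relies on (stated there without an explicit proof, but implicit in Lemmas~\ref{lem:iidPChardthenPChard} and \ref{lem:PChardtheniidPChard}): under ${\sf H}_1$ the clique size is ${\sf Bin}(n,k/n)$, conditionally on its value the clique is a uniform subset so the graph is exactly $\G(n,\frac{1}{2},k')$, and a standard concentration bound gives $|k'-k|\leq f(k)\sqrt{k}$ with probability $1-o(1)$. Your observation that this tacitly needs $k(n)\to\infty$ (so that $f(k(n))=\omega(1)$) is a fair and accurate caveat, and it is harmless in the regimes the paper uses.
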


\begin{remark}
	\label{rem:algs-work-iid}
	It can be verified that all our algorithms (even \textsc{Subsample-And-Filter}, as stated in Remark~\ref{rem:non-robust}) work as long as the estimate of $k$ they take in as input is within an additive $o(\sqrt{n})$ from the size of the true planted clique. Combining this with Fact~\ref{fact:iidPC-basically-PC}, this means that our algorithms solve ${\sf{iidPC}_D}(n,\frac{k}{n})$ with the same runtime as $\PCD$ and a mildly worse success probability.
\end{remark}

Even if we are hesitant to think of impossibility results about ${\sf{iidPC}_D}(n,\frac{k}{n})$ as being proxies for impossibility results about $\PCD$, we can think of ${\sf{iidPC}_D}(n,\frac{k}{n})$ as being the fundamental problem for which this work describes sublinear time algorithms as well as hardness results.

\subsection{What do these lower bounds formally imply for $\PCD$? }
\label{sec:lb-iidp-to-pc}

Lemma~\ref{thm:backwards-lb} and Theorem~\ref{thm:n32-lb} show that when the planted clique problem is formalized as ${\sf{iidPC_D}}(n,\frac{k}{n})$, the non-existence of very fast sublinear time algorithms for detecting large planted cliques is related to the hardness of detecting small cliques. What does this imply when the problem is formalized using the more vanilla $\PCD$? To discuss this, we note the following two easy lemmas that relate $\PCD$ and ${\sf{iidPC_D}}(n,\frac{k}{n})$.\\
\begin{lemma}[${\sf{iidPC_D}}$ is hard $\rightarrow$ ${\sf{PC_D}}$ is hard]\ \\
	\label{lem:iidPChardthenPChard}
	Let $k(n) \leq n$ and $\omega(1) = f(n) = o(\sqrt{n})$ be some sequences. Suppose that an algorithmic family $\{\A_n\}$ that attempts to solve ${\sf{iidPC_D}}(n,\frac{k(n)}{n})$ has probability of success at most $p_s(n)$. Then there exists a sequence $k'(n)$ (which may depend on $\{A_n\}$) satisfying $\lvert k'(n) - k(n) \rvert \leq f(k(n)) \sqrt{k(n)}$ with probability $1 - o(1)$
	for any $n$, such that if $\{A_n\}$ tries to solve ${\sf{PC_D}}(n,k'(n))$, it has probability of success at most $p_s(n) + o(1)$.\\
\end{lemma}

\begin{lemma}[${\sf{PC_D}}$ is hard $\rightarrow$ ${\sf{iidPC_D}}$ is hard]\ \\
	\label{lem:PChardtheniidPChard}
	Let $k(n) \leq n$ and $\omega(1) = f(n) = o(\sqrt{n})$ be some sequences. Let $k'(n)$ be any sequence satisfying $\lvert k'(n) - k(n) \rvert \leq f(k(n)) \sqrt{k(n)}$ with probability $1 - o(1)$ for any $n$.	
	Suppose that an algorithmic family $\{\A_n\}$ that attempts to solve ${\sf{PC_D}}(n,k'(n))$ has probability of success at most $p_s(n)$. Then if $\{A_n\}$ tries to solve ${\sf{iidPC_D}}(n,\frac{k(n)}{n})$, it has probability of success at most $p_s(n) + o(1)$.
\end{lemma}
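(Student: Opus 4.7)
\begin{proofof}{Lemma~\ref{lem:PChardtheniidPChard} (proposal)}
The plan is to condition on the random clique size $\hat k$ realized in an instance drawn from $\hat\G(n,\tfrac{1}{2},k(n)/n)$ and reduce to the vanilla ${\sf{PC_D}}$ problem. The key distributional observation is that, under $H_1$ of ${\sf{iidPC}_D}(n,k(n)/n)$, conditioning on $\hat k = k'$ yields a graph whose planted clique is a uniformly random subset of size $k'$, so the conditional law of the graph is exactly that of an $H_1$ instance of ${\sf{PC_D}}(n,k')$. Under $H_0$, both models produce $G(n,\tfrac{1}{2})$, which is independent of any $k'$. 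Writing the success probability of $\A_n$ as (say) $\tfrac{1}{2}[\pr_{H_0}(\A_n=0)+\pr_{H_1}(\A_n=1)]$, this observation gives
\[
P^{\mathsf{iid}}_{\mathrm{succ}}(\A_n) \;=\; \E_{\hat k \sim \mathrm{Bin}(n,k/n)}\!\left[P^{\hat k}_{\mathrm{succ}}(\A_n)\right],
\]
where $P^{k'}_{\mathrm{succ}}(\A_n)$ denotes the success probability of $\A_n$ on ${\sf{PC_D}}(n,k')$. This identity follows from linearity because the $H_0$ term is common and the $H_1$ term is a mixture over $\hat k$.

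The second step is to concentrate $\hat k$. Since $\hat k \sim \mathrm{Bin}(n,k(n)/n)$ has mean $k(n)$ and variance at most $k(n)$, a standard Chernoff bound (Lemma~\ref{lem:chernoff} of the preliminaries) yields
\[
\pr\!\left(|\hat k - k(n)| > f(k(n))\sqrt{k(n)}\right) \;\le\; 2\exp\!\left(-\Omega(f(k(n))^2)\right) \;=\; o(1),
\]
using only that $f \to \infty$. Let $E$ denote the complementary event $\{|\hat k - k(n)|\le f(k(n))\sqrt{k(n)}\}$.

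Finally, split the expectation along $E$ and its complement:
\[
P^{\mathsf{iid}}_{\mathrm{succ}}(\A_n) \;\le\; \E\!\left[P^{\hat k}_{\mathrm{succ}}(\A_n)\,\mathbb{1}_E\right] + \pr(E^c).
\]
The hypothesis is that for \emph{every} sequence $k'(n)$ satisfying the closeness bound we have $P^{k'(n)}_{\mathrm{succ}}(\A_n)\le p_s(n)$. In particular, on the event $E$ the random integer $\hat k$ always lies in the admissible range, and the pointwise bound $P^{\hat k}_{\mathrm{succ}}(\A_n)\le p_s(n)$ holds. Hence the first term is at most $p_s(n)$ and the second is $o(1)$, giving $P^{\mathsf{iid}}_{\mathrm{succ}}(\A_n)\le p_s(n)+o(1)$ as claimed.

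The argument is essentially a reversal of Lemma~\ref{lem:iidPChardthenPChard}, but is somewhat cleaner because here we can afford to take a supremum over the close-to-$k(n)$ range rather than having to exhibit a particular hard sequence. The one subtle point—really the only place one needs any care—is the pointwise interpretation of the hypothesis: it must be read as ``uniformly over all admissible sequences $k'(n)$,'' without which the supremum step of the proof fails; since the conclusion of Lemma~\ref{lem:iidPChardthenPChard} is existential while that of Lemma~\ref{lem:PChardtheniidPChard} plays the role of a converse, this is the natural reading and the one that makes the two lemmas together recover the informal equivalence between the two models.
\end{proofof}
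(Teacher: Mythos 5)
Your proposal is correct and follows essentially the same route as the paper's proof: condition on the realized clique size $\hat k$, note that conditionally the instance is exactly ${\sf{PC_D}}(n,\hat k)$, restrict to the concentration event $\lvert \hat k - k(n)\rvert \le f(k(n))\sqrt{k(n)}$ (which fails with probability $o(1)$), and apply the hypothesis pointwise over all admissible $k'$ to bound the conditional success probability by $p_s(n)$. Your explicit mixture identity and the remark about reading the hypothesis uniformly over admissible sequences are just slightly more spelled-out versions of the paper's argument.
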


\begin{proof}
We prove Lemmas~\ref{lem:iidPChardthenPChard} and ~\ref{lem:PChardtheniidPChard}.
Fix $n$ and let $k,\hat{k}$ denote $k(n),\hat{k}(n)$. Let $G$ be an instance of ${\sf{iidPC_D}}(n,\frac{k}{n})$ 
and let the random variable $\hat{k}$ denote the clique size. Clearly the problem instance $G$ conditioned on the value of $\hat{k}$ is an instance of ${\sf{PC_D}}(n,\hat{k})$. Let $S$ denote the event that an algorithm $\A_n$ succeeds on an instance on $G$, and $E$ denote the event that $\lvert \hat{k}-k \rvert \leq f(k) \sqrt{k}$. Note that $\Prob\left(E^c \right) = o(1)$ because of how the clique is chosen. Then \[\Prob\left(S \right) = \Prob\left(S|E \right)\Prob\left(E \right) +\Prob\left(S|E^c \right)\Prob\left(E^c \right) = \Prob\left(S|E \right)\Prob\left(E \right) + o(1).\] Let $S_{k'}$ denote the event that $\A_n$ succeeds on an instance of ${\sf{PC_D}}(n,k')$, and $k_E$ be the distribution of $\hat{k}$ conditioned on the event $E$ occurring. 

For Lemma~\ref{lem:iidPChardthenPChard} we have $\Prob(S) \leq p_s(n)$ by assumption, and this implies $\mathbb{E}_{\hat{k} \sim k_E}\left[ \Prob\left(S_{\hat{k}} \right) \right] = \Prob\left(S|E \right) \leq p_s(n)+o(1)$ after rearrangement. This implies that for some $k'$ such that $\lvert k'-k \rvert \leq f(k) \sqrt{k}$, $\Prob\left(S_{k'} \right) \leq p_s(n)+o(1)$, which completes the proof.

For Lemma~\ref{lem:PChardtheniidPChard} we have assumed $\Prob\left(S_{k'} \right) \leq p_s(n)$ for all $k'$ such that $\lvert k'-k \rvert \leq f(k) \sqrt{k}$. This means that $\Prob\left(S|E \right) = \mathbb{E}_{\hat{k} \sim k_E}\left[ \Prob\left(S_{\hat{k}} \right) \right] \leq p_s(n)$, which implies $\Prob(S) \leq p_s(n) +o(1)$ and completes the proof. 
\end{proof}

\begin{remark}
	\label{rem:pcc-to-iidpcc}
	It follows immediately from Lemma~\ref{lem:PChardtheniidPChard} that the \emph{Planted Clique Conjecture} (Conjecture~\ref{conj:pc}) implies the \emph{iid Planted Clique Conjecture} (Conjecture~\ref{conj:iidpc}).
\end{remark}

This let us use either the $\PCC$ or the \emph{iid Planted Clique Conjecture} to show impossibility results for ${\sf{iidPC}_D}(n,\frac{k}{n})$ in Theorem~\ref{thm:n32-lb}.

At first glance, this seems great. We can use these lemmas to get analogues of Lemma~\ref{thm:backwards-lb} and Theorem~\ref{thm:n32-lb} for $\PCD$. However, this does not quite work. We illustrate this by trying to show that the $\PCC$ implies the non-existence of rectangular non-adaptive algorithms that can solve $\PCD$ for clique sizes $k = \widetilde{\Theta}(\sqrt{n})$ and run in time $O(n^{\frac{3}{2}-\delta})$ for some constant $\delta > 0$. We already have, from Theorem~\ref{thm:n32-lb} that this fact holds for ${\sf{iidPC_D}}(n,\frac{k}{n})$. If we try to use this with Lemma~\ref{lem:iidPChardthenPChard}, all we can say is that for any algorithmic family $\A_n$, there is \textit{some} sequence $k'$, which is very close to $k$, which this algorithmic family can not solve. However, this need not be the same $k'$ for every algorithm. In effect, it is possible that for every sequence $k'$, there \textit{is} some algorithmic family that can solve it. Thus we can not rule out a fast algorithm for even single such sequence $k'$.

However, this does not mean there is nothing useful we can say. If an algorithm designer (who believes in the $\PCC$) wants to build a non-adaptive rectangular algorithm that can solve ${\sf{PC_D}}(n,k = \sqrt{n \log n})$, we can tell them that their algorithm must \textit{crucially} utilize a very good estimate of the size of the planted clique. This is because their algorithm definitely must fail for some sequence of planted clique sizes that is very close to the true size in the problem instance. As we note in Remark~\ref{rem:algs-work-iid}, the algorithms developed in this work do not crucially utilize such a fine estimate of $k$.

\section{Notation and Technical Definitions}
\label{sec:notation}

We will use standard big $O$ notation ($O, \Theta, \Omega$) and will denote
$\widetilde{O}(f(n))$ to denote ${\sf{poly}}(\log{n})O(f(n))$ and define
$\widetilde{\Theta}, \widetilde{\Omega}$ similarly. Denote
the set of graphs on $n$ vertices by $\mathcal{G}_n$.  For a vertex $v$ in graph
$G = (V, E)$, we will denote its degree by $\deg(v)$. An edge between nodes $u,v
\in V$ is denoted $(u,v)$. We let $\Binnhalf$ denote a Binomial random variable with parameters $\left(n,\frac{1}{2}\right)$. Similarly, {\sf{Bern}}($p$)
denotes a Bernoulli random variable that is $1$ with probability $p$ and $0$
otherwise. Unless stated otherwise, all logarithms are taken base $2$. $[n]$
denotes the set $\{1,2,...,n\}$.

For Definitions~\ref{def:er},~\ref{def:pc} and \ref{def:iidpc}, let $G = (V,E)$ be a graph with vertex set $V$ of size $n$. 
Each definition specifies a subset of vertices $K \subset V$. For all distinct pairs of vertices $u,v \in K$, we add the edge $(u,v)$ to $E$. 
For all remaining distinct pairs of vertices $u,v$, 
we add the edge $(u,v)$ to $E$ independently with probability $\frac{1}{2}$.\\

\begin{definition}[Erd\H{o}s-R\'enyi graph distribution: $\G(n, \frac{1}{2})$]\ \\
	\label{def:er}
	$K$ is the empty set. This distribution on graphs is denoted $\G(n, \frac{1}{2})$.\\
\end{definition}

\begin{definition}[Planted Clique graph distribution: $\G(n, \frac{1}{2},k)$]\ \\
	\label{def:pc}
	Let $K \subset V$ be a set of size $k$ chosen uniformly at random from all ${n \choose k}$ subsets of size $k$. This distribution on graphs is denoted $\G(n, \frac{1}{2},k)$.\\
\end{definition}

\begin{definition}[iid Planted Clique graph distribution: $\hat{\G}(n, \frac{1}{2},p)$]\ \\
	\label{def:iidpc}
	$K \subset V$ is a set such that every vertex $v \in V$ is included in $K$ iid with probability $p$. This distribution on graphs is denoted $\hat{\G}(n, \frac{1}{2},p)$.\\
\end{definition}

\begin{definition}[Planted Clique Detection Problem: ${\sf{PC_D}}(n,k)$]\label{def:pc-detection}\ \\
	This is the following hypothesis testing problem.
	\begin{align}
	\label{eq:hypotheses}
	{\sf H}_0: G \sim \G(n, \frac{1}{2}) \hspace{0.1cm} \text{ and } \hspace{0.1cm}
	{\sf H}_1: G\sim
	\G(n, \frac{1}{2},k)
	\end{align}
\end{definition}

\begin{definition}[Planted Clique Recovery Problem: ${\sf{PC_R}}(n,k)$]\ \\
	\label{def:pc-recovery}
	Given an instance of $G\sim
	\G(n, \frac{1}{2},k)$, recover the planted clique $K$.\\
\end{definition}

\begin{definition}[iid Planted Clique Detection Problem: ${\sf{iidPC_D}}(n,p)$]\ \\
	\label{def:iid-pc-detection}
	This is the following hypothesis testing problem.
	\begin{align}
	{\sf H}_0: G \sim \G(n, \frac{1}{2}) \hspace{0.1cm} \text{ and } \hspace{0.1cm}
	{\sf H}_1: G\sim
	\hat{\G}(n, \frac{1}{2},p)
	\end{align}
\end{definition}

For Conjectures~\ref{conj:pc} and \ref{conj:iidpc}, suppose that $\{\A_n\}$ is a sequence of randomized polynomial time algorithms that take as input the adjacency matrix $A_G$ of a graph $G$ on $n$ vertices,
$\A_n: A_G \rightarrow \{0,1\}$ and let $k(n)$ be a sequence of positive
integers such that $k(n) = O(n^{\frac{1}{2}-\delta})$ for any constant $\delta>0$.\\

\begin{conjecture}[Planted Clique Conjecture]\ \\
	\label{conj:pc}
	 If $G_n$ is a sequence of instances of ${\sf{PC_D}}(n,k(n))$, it holds that
	\[
	\pr_{{\sf H}_0}\left\{\A_n(A_{G_n}) = 0 \right\} +
	\pr_{{\sf H}_1}\left\{\A_n(A_{G_n}) = 1\right\} \leq 1+o(1)   .\]
\end{conjecture}

\begin{conjecture}[iid Planted Clique Conjecture]\ \\
	\label{conj:iidpc}
	If $G_n$ is a sequence of instances of ${\sf{iidPC_D}}(n,\frac{k(n)}{n})$, it holds that
	\[
	\pr_{{\sf H}_0}\left\{\A_n(A_{G_n}) = 0 \right\} +
	\pr_{{\sf H}_1}\left\{\A_n(A_{G_n}) = 1\right\} \leq 1+o(1).\]
\end{conjecture}

To show impossibility results, we also define what it means for an ``algorithm'' to ``solve'' the planted clique problem. Let ${\sf{P}}(n,k(n))$ denote any of the following computational problems - ${\sf{PC}_D}(n,k(n)), {\sf{PC}_R}(n,k(n)), {\sf{iidPC}_D}(n,\frac{k(n)}{n})$.\\

\begin{definition}[`Solving' a problem]\ \\
	\label{def:solving-pc}
	Let $k(n)$, $T(n)$, and $\frac{1}{p(n)}$ be some functions of $n$ such that $k(n) \leq n$. A parametrized family of algorithms $\left\{\A_n \right\}$ is said to run in time $T(n)$ and solve ${\sf{P}}(n,k(n))$ with failure probability at most $p(n)$ if the following happens. For all large enough $n$, when given an instance of ${\sf{P}}(n,k(n))$ as input, $\A_n$ terminates in time $T(n)$ and returns the correct answer with probability at least $1-p(n)$.
\end{definition}

\subsection{Model of Computation}
\label{sec:model}
To talk about sublinear algorithms, it is necessary to specify the model of computation within which we are working. Since we are working with dense graphs (both $\G(n, \frac{1}{2})$ and $\G(n, \frac{1}{2},k)$ have $O(n^2)$ edges with high probability), it is reasonable to assume that the graph is provided via its adjacency matrix. Formally, the algorithm has access to the adjacency matrix $A_G$ of the graph $G$ which is a matrix whose rows and columns are indexed by the vertex set $V$ and entries are defined as follows. $A_G(u,v) = A_G(v,u) = 1$ if $(u,v) \in E$ and $0$ otherwise. Also, $A_G(u,u) = 0$. This is essentially the same as the Dense Graph Model that has been widely studied in the graph property testing literature (see, eg, \cite{goldreich2010introduction}). Computationally, we assume that the algorithm can query any entry of this matrix in unit time. We also assume that sampling a vertex uniformly at random takes unit time, and any other similar edge or vertex manipulation operations take unit time.



\section{Algorithms}
\label{sec:algorithms}
We encourage the reader to read the intuition for these algorithms in Section~\ref{sec:techn-algs} before reading these proofs. We simply provide technical details in this section.

\subsection{Clique Completion}
\label{subsec:cliquecompletion}

\begin{algorithm}[]
	\SetAlgoLined
	\KwIn{Graph $G = (V,E) \sim \G(n, \frac{1}{2},k)$, known clique set $S_C \subset V$ }
	\KwOut{Clique $K$}
	Initialize $S = S_C$

	\For{$v \in V\setminus S_C$}{
		\If{$(v,u) \in E$  for all $u \in S_C$}{
			\hspace{25em}\smash{$\left.\rule{0pt}{3.0\baselineskip}\right\}\ \mbox{find common neighbours}$}
			
			Update $S \leftarrow S \cup \{v\}$
		}
	}
	Let $V' \leftarrow S$
	
	Pick (u.a.r) a subset of size $(1+c) \log n$ from $V'$ and call it $S'_C$
	
	Initialize $S' = S'_C$
	
	\For{$v \in V'\setminus S'_C$}{
	\If{$(v,u) \in E$  for all $u \in S'_C$}{
		\hspace{25em}\smash{$\left.\rule{0pt}{3.8\baselineskip}\right\}\ \mbox{post-processing}$}
		
		Update $S' \leftarrow S' \cup \{v\}$
	}
}

	\KwRet $S'$ 
	\caption{\textsc{Clique-Completion}}
	\label{alg:cliquecompletion}
\end{algorithm}
\vspace{.2cm}

\begin{lemma}[Runtime]\label{lem:clique-completion-runtime}
	For any constant $c>0$, \textsc{Clique-Completion} runs in time $O(n \cdot (|S_C| + \log n))$.
\end{lemma}

\begin{lemma}[Correctness]\label{lem:clique-completion-correctness}
	Draw a graph $G$ according to $G \sim \G(n,\frac{1}{2},k)$ and let the set $S_C \subset K$ (the planted clique in the instance $G$) with $\lvert S_C \rvert = (1+c) \log n$ for some constant
    $c>0$. If $k = \omega\left( \log^2 n \right) $ then the output of Algorithm~\ref{alg:cliquecompletion},
    \textsc{Clique-Completion}($G,S_C$) is $K$ with probability at least
    $1-4\frac{(3+c)}{c} \max\left(\frac{(1+c)\log^2 n}{k},  \frac{\log
    n}{n^{\frac{1+c}{3}}}\right)$.
\end{lemma}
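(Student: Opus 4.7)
The runtime bound of $O(n\log n)$ is immediate: each of the two for-loops iterates over at most $n$ vertices and performs $(1+c)\log n$ adjacency queries per iteration, and the random subset selection in line~7 costs $O(\log n)$. For correctness, the first deterministic observation is that $K \subseteq V'$: since $K$ is a clique and $S_C \subseteq K$, every vertex in $K \setminus S_C$ is adjacent to all of $S_C$ and is therefore added during the first loop.

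Writing $V' = K \sqcup F$ where $F \defeq V' \setminus K$ is the set of false positives surviving the first loop, the final output $S'$ equals $K$ if and only if (a) $S'_C \subseteq K$ (so that all clique vertices are retained when the second loop checks adjacency to $S'_C$) and (b) no $v \in F$ is adjacent to every vertex of $S'_C$ (so no false positive is added by the second loop). The plan is to bound $\Pr[\neg(\text{a})]$ and $\Pr[\neg(\text{b})]$ separately and union-bound.

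For~(a), since $|V'| \geq k$, a union bound over the $L = (1+c)\log n$ draws of $S'_C$ yields $\Pr[\neg(\text{a}) \mid G, S_C] \leq L|F|/k$. To convert this into the $\frac{(1+c)\log^2 n}{k}$ term, I need $|F| = O(\log n)$ with high probability, which I would establish via a graph-structural regularity event independent of $S_C$: by Chernoff plus a union bound, every $v \in V \setminus K$ satisfies $d_v \defeq |N(v) \cap K| \leq k/2 + O(\sqrt{k \log n})$, after which a first-moment union bound over all $\binom{k}{L}$ candidate $L$-subsets $T \subseteq K$ controls the worst-case count $|\{v \in V \setminus K : T \subseteq N(v)\}|$ (since each $T$ contributes expectation $(n-k)2^{-L} \leq n^{-c}$). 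For~(b), note that conditional on $S'_C \subseteq K$ the set $S'_C$ is uniform over $L$-subsets of $K$, so for each $v \in F$, $\Pr[v \sim \text{all of } S'_C] = \binom{d_v}{L}/\binom{k}{L} \leq (d_v/k)^L$; under the regularity event this is $(1/2 + o(1))^L = n^{-(1-o(1))(1+c)}$, and union-bounding over $|F| = O(\log n)$ false positives gives the $\frac{\log n}{n^{(1+c)/3}}$ term, with the exponent $(1+c)/3$ reflecting the slack in the Chernoff bound absorbed when balancing it against $\log \binom{k}{L}$ in the first-moment step that controlled $|F|$.

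The main obstacle I foresee is that $S_C$ is allowed to depend adversarially on $G$, so $|F|$ cannot be bounded by a naive per-vertex independence argument (for example, if the adversary picks the $L$-subset of $K$ maximizing the common non-clique neighborhood, independence of the indicators $\mathbf{1}[v \in F]$ fails). The remedy in the plan above is to pull out a purely graph-structural regularity event that holds uniformly in all adversarial choices of $S_C$, and then to rely on the freshly-sampled $S'_C$ --- whose randomness is independent of both $G$ and $S_C$ --- to ``purify'' the output even when $V'$ carries $O(\log n)$ false positives. The prefactor $\frac{4(3+c)}{c}$ in the stated bound comes from how aggressively one can push the Chernoff slack while preserving the union bounds, and getting this trade-off to work cleanly for every $c > 0$ constant is the delicate calculation I expect to spend the most time on.
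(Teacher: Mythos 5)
Your plan is essentially the paper's proof: the deterministic inclusion $K\subseteq V'$, a graph-structural bound on the false-positive set $F=V'\setminus K$ that is uniform over all adversarial choices of $S_C$, the bound $\Pr[S_C'\not\subseteq K]\le |F|\,(1+c)\log n/k$, and then, using that conditionally on $S_C'\subseteq K$ the set $S_C'$ is a uniform $(1+c)\log n$-subset of $K$, the estimate $\binom{d_v}{L}/\binom{k}{L}\le (d_v/k)^L$ together with a high-probability cap on the clique-degree $d_v$ of non-clique vertices (the paper uses $d_v\le 2k/3$ via Lemma~\ref{lem:2kby3}; your $k/2+O(\sqrt{k\log n})$ works equally well and in fact yields a better exponent than $(1+c)/3$). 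These are precisely the events $A_1,A_2,A_3,A_4$ in the paper's proof of Lemma~\ref{lem:clique-completion-correctness}.

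The one step whose stated justification would not go through as written is the uniform bound on $|F|$. The parenthetical you offer --- each fixed $L$-subset $T\subseteq K$ has expected common non-clique neighborhood $(n-k)2^{-L}\le n^{-c}$ --- cannot by itself survive the union bound over the $\binom{k}{L}\le n^{(1+c)\log k}$ candidate subsets: Markov at a threshold of order $\log n$ gives only a per-$T$ failure probability of order $n^{-c}/\log n$, vastly larger than $n^{-(1+c)\log k}$. What is needed (and what the paper's Lemma~\ref{lem:false=pos} does) is, for each fixed $T$, to bound the probability that some set of $\ell$ non-clique vertices is adjacent to every element of $T$ by $\binom{n}{\ell}2^{-\ell L}\approx n^{-c\ell}$, which decays geometrically in $\ell$; taking the threshold $\ell_0=\frac{1}{c}\bigl(1+(2+c)\log k\bigr)=\Theta(\log k/c)$ makes this small enough to absorb the union over all $T$, giving $\Prob(|F|\ge\ell_0)\le n^{-\log k}$. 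With that repair (and noting that the per-vertex degree regularity event is not needed for the $|F|$ bound, only for your step (b)), your argument coincides with the paper's and delivers the stated failure probability.
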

\begin{proof}
The algorithm has three stages and
our proof upper bounds the probability of failure of each stage (conditioned on the the previous stages succeeding).

\textbf{Step 1} The first stage
of the algorithm begins with our known clique set, and appends to it every vertex
which is a common neighbor.  We need the number of
non-clique vertices added to not be too big.  Let $A_1$ be the event that 
$\lvert S \setminus K \rvert < \frac{1}{c}(1+(2+c)\log k) := \ell_0$, then Lemma~\ref{lem:false=pos} shows through a simple union bound argument that $\Prob\left(A_1^c \right)
\leq 
\left( \frac{1}{n}\right) ^{\log k}.$

\textbf{Step 2} The algorithm then takes the output set of the first
stage, $S$, and keeps a uniformly random subset $S_C'$ of size
$(1+c)\log{n}$.  let $A_2$ be the event that $S_C' \subset
K$. We show
that $\Prob(A_2^c \mid A_1) \leq \frac{\ell_0 (1 + c) \log{n}}{k}$. To this end, let $b := \lvert V'  \setminus K \rvert = \lvert
S  \setminus K \rvert$ and notice that $b < \ell_0$.  Now,
\begin{align*}
\Prob\left(A_2 \mid A_1 \right) &= \frac{{k \choose (1+c) \log n}}{{k+b \choose (1+c) \log n}}  = \frac{(k-(1+c)\log n + b)(k-(1+c)\log n + b-1)...(k-(1+c)\log n + 1)}{(k + b)(k+ b-1)...(k + 1)} \\
& \geq \left(\frac{k-(1+c)\log n}{k} \right)^b \geq 1- \frac{b(1+c)\log n}{k} \geq 1- \frac{\ell_0 (1+c)\log n}{k}.
\end{align*}

Finally, we analyze the last stage of the algorithm. We notice that $S_C'
\subset K$ implies $K \subseteq S$; that is, if the input to the last stage of
the algorithm is entirely contained within the clique, then the output of the
algorithm contains the clique.  We then prove that the output of the algorithm
is exactly the clique by showing that it has no intersection with the non-clique
vertices.  Let $A_3$ be the event that $S' \setminus K =
\emptyset$.

\textbf{Step 3} We show that 
\[\Prob(A_3^c \mid A_1, A_2) \leq
4n\exp\left(\frac{-k}{54} \right) + \ell_0 n^{-\frac{(1+c)}{3}}.
\]
We need to control the number of clique vertices
any non-clique vertex is connected to. This is done in Lemma~\ref{lem:2kby3}, which gives $\Prob(A_4^c) \leq n\exp\left(\frac{-k}{54} \right)$ where $A_4$ is the event that every non-clique vertex is connected to at
most $2k/3$ clique vertices. Note that
\[
    \Prob(A_3^c \mid A_1, A_2) \leq \Prob(A_3^c \mid A_1, A_2, A_4) + \Prob(A_4^c
    \mid A_1, A_2) \leq \Prob(A_3^c \mid A_1, A_2, A_4) + 4\Prob(A_4^c),
\]
where the last inequality follows as long as $n, k$ are large enough to satisfy $\Prob(A_1) \geq
1/2$ and $\Prob(A_2 \mid A_1) \geq 1/2$. The upshot of choosing $S_C'$ randomly the way we do is that conditioned on $A_1$ and $A_2$, it is a uniformly random subset of the planted clique $K$. Further conditioning on $A_4$,
for a given non-clique vertex in $V'$, the probability that it is connected to all vertices in $S'_C$ is at most \[\frac{{\frac{2k}{3} \choose (1+c)\log n}}{{k \choose (1+c) \log n}} \leq \left(1-\frac{(1+c) \log n}{k} \right)^{\frac{k}{3}} \leq \exp\left(-\frac{(1+c) \log n}{3} \right)  = n^{-\frac{(1+c)}{3}}\]
Union bounding over all the at most $\ell_0$ non-clique vertices in $S'_C$, no non-clique vertex gets added to $S'$  except with probability at most $\ell_0n^{-\frac{(1+c)}{3}}$, which means $\Prob(A_3^c \mid A_1, A_2, A_4) \leq \ell_0n^{-\frac{(1+c)}{3}}$

$A_2 \cap A_3$ is the
event that the clique $K$ is contained in the output $S'$ and that no non-clique
vertex is contained in $S'$.  Thus, $A_2 \cap A_3$ is the success event.  Notice
that $1 - \Prob(A_2 \cap A_3) \leq \Prob(A_1^c)
+ \Prob(A_2^c \mid A_1) + \Prob(A_3^c \mid A_2, A_1)$. Thus, Steps 1-3 imply 
\begin{align*}
    1 - \Prob(A_2 \cap A_3) &\leq \left( \frac{1}{n}\right)^{\log k} + \ell_0
    \left(\frac{(1+c)\log n}{k} +
    n^{-\frac{(1+c)}{3}} \right)  + 4n\exp\left(\frac{-k}{54} \right)  \\
    &\leq 4\frac{(3+c)}{c} \max\left(\frac{(1+c)\log^2 n}{k},  \frac{\log
    n}{n^{\frac{1+c}{3}}}\right). 
\end{align*}
\end{proof}

\subsection{An $\widetilde{O}(n^{3/2})$ algorithm for
finding cliques of size $k = \Theta(\sqrt{n \log n})$}
\label{subsec:subsampledcounting}

\begin{theorem}
\label{thm:subsampledcounting}
Let $8\sqrt{n \log n} \leq k$, and let $L_{in}$ be a user defined parameter. If $  \frac{4n \cdot (\log n)^2}{k} \leq L_{in}$, when given an instance $G$ of $\G(n, \frac{1}{2},k)$, $\textsc{Keep-High-Degree-And-Complete}\left(G,L_{in}\right)$ (Algorithm~\ref{alg:subsampledcounting}) runs in time ${O}(nL_{in}+n \log n)$ and outputs the hidden clique $K$ with probability at least $1- O\left(\frac{\log^2 n}{\sqrt{n}} \right) $. 
\end{theorem}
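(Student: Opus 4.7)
The plan is to follow the intuition in Section~\ref{sec:techn-algs} verbatim: the algorithm should (i) draw $L_{in}$ uniformly random vertices, (ii) spend $O(nL_{in})$ time computing each of their degrees in $G$, (iii) keep exactly those sampled vertices whose degree exceeds a threshold of the form $\tau := n/2 + c\sqrt{n\log n}$ for a suitable constant $c$, and (iv) hand any $2\log n$ of the surviving vertices to \textsc{Clique-Completion}, which by Lemma~\ref{lem:clique-completion-runtime} finishes in $O(n\log n)$ extra time. Adding the two contributions gives the claimed running time $O(nL_{in}+n\log n)$.

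For correctness I would establish three high-probability events and then union-bound. First, the sample contains at least $2\log n$ clique vertices. The expected number of clique vertices in the sample is at least $L_{in}\cdot k/n \ge 4\log^2 n$ by the hypothesis $L_{in}\ge 4n(\log n)^2/k$, and a multiplicative Chernoff bound makes the probability of fewer than $2\log n$ clique vertices subpolynomial in $n$. Second, every non-clique vertex has degree at most $\tau$: each non-clique degree is distributed as $\Binnhalf$, so by Hoeffding and a union bound over the (at most) $n$ non-clique vertices, this fails with probability at most $n\exp(-c^2\log n/2)$, which is $O(1/\sqrt n)$ for $c$ large enough. Third, every clique vertex has degree at least $(n-k)/2+(k-1)-c\sqrt{n\log n}=n/2+k/2-1-c\sqrt{n\log n}$, which exceeds $\tau$ once $k\ge 8\sqrt{n\log n}$ with room to spare; again Hoeffding plus a union bound over the $k\le n$ clique vertices controls this step.

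On the intersection of these three events, the vertices kept after the degree filter are exactly the clique vertices sampled, and their count is at least $2\log n$. Feeding any $(1+c)\log n$ of them (with $c=1$) into \textsc{Clique-Completion} and invoking Lemma~\ref{lem:clique-completion-correctness} returns $K$ except with probability $O\!\left(\max(\log^2 n/k,\ \log n/n^{2/3})\right)$. Using $k\ge 8\sqrt{n\log n}$, this is $O(\log^{3/2}n/\sqrt n)=O(\log^2 n/\sqrt n)$. A final union bound collects the contributions from the sampling step, the two concentration steps, and the clique-completion step; each is $O(\log^2 n/\sqrt n)$ or smaller, giving the claimed failure probability.

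The main obstacle, and the one place where constants must actually be tracked, is the degree-separation step: one needs the concentration radius $c\sqrt{n\log n}$ to be small enough that the clique and non-clique degree ranges remain disjoint for all $k\ge 8\sqrt{n\log n}$, yet large enough that the union bound over $n$ vertices still yields a $1/\sqrt n$-type bound on the tail. Once the constant $c$ is fixed compatibly with the assumption $k\ge 8\sqrt{n\log n}$, the remaining calculations are routine Chernoff/Hoeffding applications and a clean union bound, and the runtime analysis is immediate from the description of the algorithm together with Lemma~\ref{lem:clique-completion-runtime}.
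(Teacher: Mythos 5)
Your overall route is the same as the paper's: degree thresholding to separate clique from non-clique vertices, a sampling argument to collect at least $2\log n$ clique vertices among $L_{in}$ draws, and then \textsc{Clique-Completion} plus Lemma~\ref{lem:clique-completion-correctness} to finish, with a union bound over the failure events. The runtime analysis and the degree-separation calculations are fine and match the paper (the paper uses threshold $T_d=\frac{n}{2}+2\sqrt{n\log n}$ sandwiched between $d_{\max}=\frac{n}{2}+\sqrt{3n\log n}$ and $d_{\min}=\frac{n}{2}+\frac{k}{2}-\sqrt{3n\log n}$, exactly the kind of constant-tracking you describe).

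The gap is in the last step, where you write ``feeding any $2\log n$ of the surviving vertices into \textsc{Clique-Completion} and invoking Lemma~\ref{lem:clique-completion-correctness}'' and then union bound. The set $\underbar{$S_C$}$ handed to \textsc{Clique-Completion} is a function of the graph (through the observed degrees), and the event you condition on (degree separation holds, enough clique vertices sampled) is correlated with the very edges that \textsc{Clique-Completion}'s analysis relies on; Lemma~\ref{lem:clique-completion-correctness} is stated for $G\sim\G(n,\frac12,k)$ \emph{unconditioned}, so you cannot apply it to the conditional law of $G$ given your three good events without justification. The paper resolves this with a genie-aided coupling: a genie replaces $\underbar{$S_C$}$ by a true clique subset whenever it is not one, so the lemma applies unconditionally to the genie's run, and since the genie's run coincides with the real run on the good event $S_{\mathsf{good}}$, one gets $\Prob(C^c, S_{\mathsf{good}})=\Prob(C_{\mathsf{genie}}^c, S_{\mathsf{good}})\le\Prob(C_{\mathsf{genie}}^c)=O(\log^2 n/\sqrt n)$. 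You need this (or an explicit uniform-over-clique-subsets restatement of the completion lemma) to make your union bound legitimate. A second, smaller issue: Algorithm~\ref{alg:subsampledcounting} samples the $L_{in}$ vertices \emph{with replacement}, so a Chernoff bound on the number of clique hits does not by itself give $2\log n$ \emph{distinct} clique vertices in the set $S_C$; the paper sidesteps this by partitioning $K$ into $2\log n$ blocks of size $k/(2\log n)$ and showing each block is hit, which is the cleaner way to patch your sampling step.
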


\begin{proof}
	\textbf{Runtime analysis:}
	Using Lemma~\ref{lem:clique-completion-runtime}, the algorithm clearly runs in time $O(nL_{in} + n \log n)$.
	
	\textbf{Correctness analysis:} 
    We first show that for this size of the clique $k$, degree counting is
    sufficient to separate clique vertices from non-clique vertices.
    We then show that randomly sampling $L_{in}$ vertices
    will yield a subset of the clique of size at least $2\log{n}$.  We conclude
    by invoking Lemma~\ref{lem:clique-completion-correctness} to show that \textsc{Clique-Completion} works correctly with high probability. 

    Let
    $D_{\min}$ denote the event that the minimum degree of a clique vertex is at
    least $d_{\min} = \frac{n}{2} + \frac{k}{2} - \sqrt{3 n \log n}$ and $D_{\max}$ denote
    the event that the maximum degree of a non-clique vertex is at most $d_{\max} = \frac{n}{2} +
    \sqrt{3 n \log n}$.  Note that the degree of a non-clique vertex is
    $\Binnhalf$, so by a Chernoff bound~\ref{lem:chernoff} and a union bound,
    $\Prob(D_{\max}^{c}) \leq \frac{(n-k)}{n^2}  \leq \frac{1}{n}$.  Likewise,
    the degree of a non-clique vertex is $\Binnkhalf + k$, so a similar argument
    shows $\Prob(D_{\min}^c) \leq \frac{k}{n^2} \leq \frac{1}{n}$. 	
    Because $8 \sqrt{n \log n} \leq k$ and 
    setting $T_d = \frac{n}{2} + 2\sqrt{n \log n}$, we have that $d_{\max} < T_d < d_{\min}$. 
    Therefore except with probability at most
    $\frac{1}{n} + \frac{1}{n} \leq \frac{2}{n}$, the degree of all clique nodes
    is larger than $T_d$ and that of all non-clique nodes is smaller than $T_d$.

    Now, we show that randomly sampling $L_{in}$ vertices will
    yield at least $2\log{n}$ clique vertices. Let this random sample of $L_{in}$ vertices be denoted $S_L$. If we divide the clique vertices $K$ into $2 \log n$
    disjoint sets $K_1,K_2,...,K_{2 \log n}$ of equal size $\frac{k}{2 \log
    n}$\footnote{We have omitted certain floors and ceilings for the sake of
readability}, then with high probability we will get at least one vertex from each $K_i$. This implies that we will have at least $2 \log n$ distinct clique vertices in $S_L$. Let $\mathcal{E}_i$ be the event that $S_L \cap K_i = \emptyset$. 
\[
    \Prob\left(\mathcal{E}_i\right) = \left(1-\frac{k}{2n \log n}\right)^{L_{in}}
\leq \exp{\left( \frac{-kL_{in}}{2n \log n}\right) } \leq 2^{\left(
\frac{-kL_{in}}{2n \log n}\right)}. 
\]
Let $\mathcal{E} = \bigcap_{i} \mathcal{E}_i^c$; that is $\mathcal{E}$ is
the event that each $K_i$ has non-empty intersection with $S_L$.  Then, a union
bound shows that $\Prob(\mathcal{E}^{c}) \leq (2 \log n)  2^{\left( \frac{-kL}{2n \log n}\right)}$. Since $L_{in} \geq \frac{4n(\log n)^2}{k}$, this probability of failure is at most $\frac{2 \log n}{n^2}$.

The probability Algorithm~\ref{alg:subsampledcounting} fails can be denoted by $\Prob(C^c)$ where $C$ is the event that \textsc{Clique-Completion} outputs $K$, the planted clique. So we can upper bound $\Prob(C^c) \leq \Prob(C^c , D_{\max},D_{\min},\mathcal{E}) + \Prob(D_{\max}^c) +  \Prob(D_{\min}^c) +  \Prob(\mathcal{E}^c)$. Using our estimates from above, we can upper bound $\Prob(D_{\max}^c) + \Prob(D_{\min}^c) +  \Prob(\mathcal{E}^c) = O(\frac{1}{n}) + O(\frac{\log n}{n^2}) = O(\frac{\log^2 n}{\sqrt{n}})$. Hence it only remains to show that $\Prob(C^c , D_{\max},D_{\min},\mathcal{E}) = O(\frac{\log^2 n}{\sqrt{n}})$ to complete the proof.

Consider a genie who gets the same input as Algorithm~\ref{alg:subsampledcounting} and also knows the location of the planted clique. The genie observes our algorithm, and if \underbar{$S_C$} is not a subset of the planted clique $K$, the genie selects any other set of $2 \log n$ true clique vertices and runs \textsc{Clique-Completion} using this new genie-aided input set instead. We can denote the event that the genie's version of \textsc{Clique-Completion} succeeds as $C_{\mathsf{genie}}$ and by Lemma~\ref{lem:clique-completion-correctness}, we can conclude that $\Prob(C_{\mathsf{genie}}^c) = O(\frac{\log^2 n}{\sqrt{n}})$. This is because the genie-aided algorithm takes as input a graph $G \sim \G(n, \frac{1}{2},k)$ and a true clique subset, which are precisely the conditions on Lemma~\ref{lem:clique-completion-correctness}.

To relate this to our quantity of interest, we note that when $S_{\mathsf{good}}  := D_{\max} \cap D_{\min} \cap \mathcal{E}$ happens, the input \underbar{$S_C$} used by Algorithm~\ref{alg:subsampledcounting} is a subset of $K$ and so the genie-aided algorithm and Algorithm~\ref{alg:subsampledcounting} behave identically conditioned on $S_{\mathsf{good}}$. This means that $\Prob(C^c , S_{\mathsf{good}}) = \Prob(C_{\mathsf{genie}}^c , S_{\mathsf{good}}) \leq \Prob(C_{\mathsf{genie}}^c) = O(\frac{\log^2 n}{\sqrt{n}})$ which completes the proof.
\end{proof}

\begin{algorithm}
\SetAlgoLined
\KwIn{Graph $G = (V,E) = \G(n, \frac{1}{2},k)$, number of vertices to sample $L_{in}$}
\KwOut{Clique $K$}
Initialize $S_C = \emptyset$
\SetKwFor{RepTimes}{repeat}{times}{end}

\RepTimes{$L_{in}$}{
    Sample a random vertex $v \in G$ and compute $\deg(v)$

    \If{$\deg(v) \geq \frac{n}{2} + 2\sqrt{n \log n}$}{
    \hspace{30em}\smash{$\left.\rule{0pt}{3.3\baselineskip}\right\}\ \mbox{high degree}$}
    	
        Update $S_C \leftarrow S_C \cup \{v\}$
    }
}

\If{$\lvert S_C \rvert < 2 \log n$}{
	\KwRet \textsc{Declare Failure}
}

Initialize $\underbar{$S_C$} = \emptyset $

\hspace{33em}\smash{$\left.\rule{0pt}{2.7\baselineskip}\right\}\ \mbox{complete clique}$}

Select $2 \log n$ vertices from $S_C$ uniformly at random and add them to \underbar{$S_C$}
$S \leftarrow \textsc{Clique-Completion}(G,\underbar{$S_C$})$

\KwRet $S$ 
\caption{\textsc{Keep-High-Degree-And-Complete}}
\label{alg:subsampledcounting}
\end{algorithm}
\vspace{.2cm}

\subsection{An $\widetilde{O}\left(\left(n/k\right)^3 + n\right)$ algorithm for
	finding cliques of size $k = \Omega(\sqrt{n \log n})$}
\label{subsec:n3k3}

\begin{theorem}
\label{thm:subsample+khdac}
Let $32 \sqrt{n \log n} \leq k \leq n	$ and set $p = \frac{512 \cdot  n \log n}{k^2}$. Given an instance $G$ of $\G(n, \frac{1}{2},k)$, $\textsc{Subsample-And-KHDAC}\left(G,k,p\right)$ (Algorithm~\ref{alg:subsample+khdac}) runs in time $O\left(\frac{n^3}{k^3} \cdot \log^3 n  + n \log n \right)$ 
and outputs the hidden clique $K$ with probability at least $1-O\left(\frac{\log^2 (pk)}{\sqrt{pk}} \right)$.
\end{theorem}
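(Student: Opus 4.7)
The overall strategy is a divide-and-conquer reduction to Theorem~\ref{thm:subsampledcounting}. Given that $k$ is large, we cannot afford to even look at every vertex, so I would have \textsc{Subsample-And-KHDAC} first form a random subset $V' \subseteq V$ by including each vertex independently with probability $p$, and then run \textsc{Keep-High-Degree-And-Complete} on the induced subgraph $G' = G[V']$. The output of that call gives us (with high probability) the full planted clique restricted to $V'$, from which we select an arbitrary $2\log n$ vertices and feed them to \textsc{Clique-Completion}($G$,\, $\cdot$) to finish recovering $K$.

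For the runtime, sampling $V'$ costs $O(n)$. The induced subgraph has $n' = |V'|$ vertices with $\E[n'] = pn$; by a Chernoff bound $n' \le 2pn$ w.h.p. Feeding parameter $L_{in} \approx 4pn\log^2(pn)/(pk)$ into \textsc{Keep-High-Degree-And-Complete} gives runtime $O(n' L_{in} + n'\log n') = O\bigl((pn)^2 \log^2 n / (pk) + pn\log n\bigr)$. Plugging in $p = 512 n\log n/k^2$ makes the first term $O((n/k)^3 \log^3 n)$, and the final invocation of \textsc{Clique-Completion} on $G$ contributes $O(n\log n)$ by Lemma~\ref{lem:clique-completion-runtime}, yielding the claimed bound.

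For correctness, the key is to verify that the hypotheses of Theorem~\ref{thm:subsampledcounting} hold for $G'$ with high probability. Let $k' = |K \cap V'|$, which is distributed as $\operatorname{Bin}(k,p)$. By Chernoff, $k' \ge pk/2$ and $n' \le 2pn$ w.h.p.\ (both events fail with probability $\exp(-\Omega(pk))$, and $pk = 512 n\log n/k \gg 1$ since $k \le n$). Under these events I need to check (i) $k'/2 \ge 8\sqrt{n'\log n'}$ and (ii) $L_{in} \ge 4n'\log^2(n')/k'$. Both reduce to showing $(pk)^2 \gtrsim pn\log(pn)$ up to constants, which is equivalent to $k^2 \gtrsim (n/p)\log(pn)$; substituting $p = 512n\log n/k^2$ turns this into $\log n \gtrsim \log(pn)$, true since $p\le 1$. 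The slack factor of $32$ (rather than $8$) in the theorem's hypothesis on $k$ is exactly what makes $p \le 1$ in the first place and absorbs the constant-factor losses from the two Chernoff events. Conditional on these events, Theorem~\ref{thm:subsampledcounting} applied to $G'$ (which, as a subset of independently sampled vertices of $G$, is itself an instance of $\G(n',\tfrac12,k')$) recovers $K \cap V'$ with failure probability $O(\log^2 n'/\sqrt{n'})$.

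The last step hands any $2\log n$ vertices of the recovered subset $K\cap V'$ (which has size $\ge pk/2 \gg \log n$) to \textsc{Clique-Completion} on the original $G$; by Lemma~\ref{lem:clique-completion-correctness} this fails with probability $O(\log^2 n / k)$, since $k = \omega(\log^2 n)$ is implied by the hypothesis. A union bound over the concentration events, the KHDAC-failure event, and the completion-failure event yields the claimed success probability; the $O(\log^2(pk)/\sqrt{pk})$ term absorbs the dominant contribution, which comes from applying Theorem~\ref{thm:subsampledcounting} to the subgraph. The main obstacle I foresee is purely bookkeeping: carefully verifying that the constants in the hypothesis $k \ge 32\sqrt{n\log n}$ and in $p = 512 n\log n/k^2$ simultaneously force $p \le 1$, keep the subsampled clique comfortably above the degree-separation threshold of Theorem~\ref{thm:subsampledcounting} after Chernoff shrinkage, and make the stated parameter $L_{in}$ inside the black-box call large enough, all without any slack being wasted. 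Once those constants line up, the rest of the argument is a clean reduction.
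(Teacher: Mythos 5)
Your proposal is correct and follows essentially the same route as the paper: subsample roughly $pn$ vertices, verify via concentration that the induced subgraph satisfies the hypotheses of Theorem~\ref{thm:subsampledcounting} (including the $L_{in}$ condition), run \textsc{Keep-High-Degree-And-Complete} there, and finish with \textsc{Clique-Completion} on $G$ plus a union bound. The only differences are cosmetic: the paper's algorithm samples exactly $n'=np$ vertices (so it uses hypergeometric rather than binomial concentration), and it handles the conditioning in the final \textsc{Clique-Completion} step via the genie-aided coupling from Theorem~\ref{thm:subsampledcounting} rather than a direct invocation of Lemma~\ref{lem:clique-completion-correctness}, a detail your union bound glosses over but which does not change the argument.
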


\begin{proof}
	\textbf{Runtime Analysis:} Since we can sample a random vertex in unit time in our model of computation, sampling $pn$ vertices takes time $O(pn)$. Further, using the running times from Theorem~\ref{thm:subsampledcounting} and Lemma~\ref{lem:clique-completion-runtime}, it is easy to observe that the algorithm runs in time \[O\left(pn + n'L_{in}' + n \log n \right)  = O\left(\frac{pn^2}{k} \cdot \log^2 n   + n \log n \right) = O\left(\frac{n^3}{k^3} \cdot \log^3 n + n \log n \right)\]
	
	\textbf{Correctness Analysis:} 
    We first see that the subsampling step behaves as expected, in the sense that $k_p = \lvert S_P \cap K \rvert$ is roughly equal to $pk$. 
    Let $A_1$ be the event that $\frac{pk}{2} \leq k_p \leq \frac{3pk}{2}$.
    As $k_p$ is a hypergeometric random variable, we use bounds on the concentration of a hypergeometric random variable around its mean (see, for eg, \cite[Theorem 1]{hush2005concentration}) to get that 
    $\Prob(A_1^c) \le 2 \exp\left(-\frac{pk^2}{4n} \right) \leq \frac{2}{n^{128}}$.

	It is easy to observe that $G' \sim \G(n',\frac{1}{2},k_p)$. Now we show that $S_C$ , the output of the \textsc{Keep-High-Degree-And-Complete} subroutine is  equal to $S_P \cap K$ with high probability.
	Denote this event by $A_2$.
	Since $ k' = \frac{pk}{2} \leq  k_p$ and (using that $p \leq \frac{1}{2}$) \[8\sqrt{\lvert S_P \rvert \log {\lvert S_P \rvert}} \leq 8\sqrt{2} \sqrt{pn \log n} = \frac{pk}{2} \leq k_p ,\] the conditions of Theorem~\ref{thm:subsampledcounting} are satisfied if $A_1$ holds, therefore
	$\Prob(A_2^c \mid A_1) = O\left(\frac{\log^2 n'}{\sqrt{n'}} \right)  =  O\left(\frac{\log^2 (pk)}{\sqrt{pk}} \right)$.
	
	To finish the proof, we need to prove that \textsc{Clique-Completion} succeeds
	with high probability. Let $A_3$ denote the probability that the output
	of clique completion (and of the algorithm) $S = K$. Overall, we can then upper bound the probability of failure of Algorithm~\ref{alg:subsample+khdac} as $\Prob(A_3^c) \leq \Prob(A_3^c,A_1,A_2) + \Prob(A_2^c | A_1) + \Prob(A_1^c)$. We have shown that $\Prob(A_2^c | A_1) + \Prob(A_1^c) = O(\frac{1}{n^{128}}) +  O\left(\frac{\log^2 (pk)}{\sqrt{pk}} \right) = O\left(\frac{\log^2 (pk)}{\sqrt{pk}} \right)$. Thus it only remains to show that $\Prob(A_3^c,A_1,A_2) = O\left(\frac{\log^2 (pk)}{\sqrt{pk}} \right)$. We can now define $S_{\mathsf{good}} : = A_1 \cap A_2$ and use the same genie-aided analysis as in the proof of Theorem~\ref{thm:subsampledcounting} to conclude that $\Prob(A_3^c,A_1,A_2) = O\left(\frac{\log^2 (n)}{\sqrt{n}} \right) = O\left(\frac{\log^2 (pk)}{\sqrt{pk}} \right)$.
\end{proof}

\vspace{.2cm}
\begin{algorithm}
	\SetAlgoLined
	\KwIn{Graph $G = (V,E) = \G(n, \frac{1}{2},k)$, clique size $k$, subsampling fraction $p$}
	\KwOut{Clique $K$}
	Set $n' = np$ and $k' = \frac{pk}{2}$
	
	Initialize $S_P = \emptyset$

	Pick $n'$ vertices uniformly at random from $V$ and add them to $S_P$
	\hspace{3.75em}\smash{$\left.\rule{0pt}{1.5\baselineskip}\right\}\ \mbox{subsample}$}
	
	Let $G'$ be the subgraph of $G$ induced by $S_P$
	
	$S_C \leftarrow \textsc{Keep-High-Degree-And-Complete}\left(G',L'_{in}=\frac{4n' \cdot (\log n')^2 }{k'} \right)$
	\hspace{2.75em}\smash{$\left.\rule{0pt}{1.0\baselineskip}\right\}\ \mbox{high degree}$}
	
	\If{$\lvert S_C \rvert < 2 \log n$}{
		\KwRet $k$ vertices chosen uniformly at random from $V$
	}

	Initialize $\underbar{$S_C$} = \emptyset $
	
	\hspace{33em}\smash{$\left.\rule{0pt}{2.7\baselineskip}\right\}\ \mbox{complete clique}$}
	
	Select $2 \log n$ vertices from $S_C$ uniformly at random and add them to \underbar{$S_C$}
	$S \leftarrow \textsc{Clique-Completion}(G,\underbar{$S_C$})$
	
	\KwRet $S$ 
	\caption{\textsc{Subsample-And-KHDAC}}
	\label{alg:subsample+khdac}
\end{algorithm}
\vspace{.2cm}

\subsection{An ${O}\left(n^2/\left(\frac{k^2}{n}\exp{\left(\frac{k^2}{24n}\right)}\right)\right)$ algorithm for finding cliques of size $\omega(\sqrt{n}) = k = o(\sqrt{n \log n})$}
\label{sec:alg-small-k}

\vspace{.2cm}
\begin{algorithm}[ht]
	\SetAlgoLined
	\KwIn{Graph $G = (V,E) = \G(n, \frac{1}{2},k)$, clique size $k$, subsampling fraction $p$ }
	\KwOut{Clique $K$}
	Let $V_1,V_2$ be two disjoint subsets of $V$ of size $\frac{n}{2}$ each.
	
	Initialize $S_P = \emptyset$
	
	\For{$v \in V_1$}{
		With probability $p$, update $S_P \leftarrow S_P \cup \{v\}$
		
		\hspace{31.5em}\smash{$\left.\rule{0pt}{1.8\baselineskip}\right\}\ \mbox{subsample}$}
	}
	\If{$\lvert S_P \rvert > pn$}{
		\KwRet $k$ vertices chosen uniformly at random from $V$
	}
	
	Initialize $S_F = \emptyset$
	
	Set $T_l = \frac{n+k}{4} - 2\sqrt{n}$ and $T_d = \frac{n+k}{4} + 2\sqrt{n}$
	
	\For{$v \in S_P$}{
		\hspace{31.5em}\smash{$\left.\rule{0pt}{3.7\baselineskip}\right\}\ \mbox{filter}$}
		
		\If{$T_l \leq \sum\limits_{u \in V_2} \mathbb{1}_{((u,v) \in E)} \leq T_d$}{$S_F \leftarrow S_F \cup \{v\}$}
		
	}
	
	Set $n' = \lvert S_F \rvert$
	
	Let $G'$ be the subgraph of $G$ induced by $S_F$	\hspace{13.5em}\smash{$\left.\rule{0pt}{1.8\baselineskip}\right\}\ \mbox{\cite{feige2010finding}}$}

	$S_C \leftarrow \textsc{Low-Degree-Removal}\left(G'\right)$ \cite[Theorem 1]{feige2010finding}
	
	\If{$\lvert S_C \rvert < 2 \log n$}{
		\KwRet $k$ vertices chosen uniformly at random from $V$
	}
	
	Initialize $\underbar{$S_C$} = \emptyset $
	
	\hspace{33em}\smash{$\left.\rule{0pt}{2.7\baselineskip}\right\}\ \mbox{complete clique}$}
	
	Select $2 \log n$ vertices from $S_C$ uniformly at random and add them to \underbar{$S_C$}
	$S \leftarrow \textsc{Clique-Completion}(G,\underbar{$S_C$})$
	
	\KwRet $S$ 
	\caption{\textsc{Subsample-And-Filter}}
	\label{alg:subsample+filter}
\end{algorithm}
\vspace{.2cm}

\begin{theorem}\label{thm:subsample+filter}
	Let $\omega\left(\sqrt{n} \right)  = k = o\left(\sqrt{n \log n}\right)$, and let $G$ be an instance of $\G\left(n,\frac{1}{2},k\right)$. Set $p = \frac{n}{k^2}\exp\left(\frac{-k^2}{24n} \right)$. Then $\textsc{Subsample-And-Filter}\left(G,k,p \right)$ (Algorithm~\ref{alg:subsample+filter}) runs in time \[O(pn^2) = {O}\left(n^2/\left(\frac{k^2}{n}\exp{\left(\frac{k^2}{24n}\right)}\right)\right) = o(n^2) \] and outputs the planted clique except with probability at most $\frac{1}{3}+o(1)$.
\end{theorem}

\begin{proof}
	Note that since $k = o(\sqrt{n \log n})$, we have that $p = \omega(n^{-\epsilon})$ for any constant $\epsilon>0$. So we have $pk = \omega(n^{0.49})$ and $pn = \omega(n^{0.99})$.

	\textbf{Runtime Analysis:} 
	
	The subsampling step takes time $O(n)$. If $\lvert S_P \rvert >  pn$, the algorithm terminates with a further $O(k)$ runtime. This would mean a runtime bounded by $O(n)$ which is also in $O(pn^2)$ .
	
	If, on the other hand, $\lvert S_P \rvert \leq pn$, then we need to compute $\deg(v)$ for at most $pn$ vertices and each such computation takes time at most $O(n)$. This step thus takes time $O(pn^2)$. By using runtime bounds from \cite{feige2010finding} and Lemma~\ref{lem:clique-completion-runtime}, susbequent steps of the algorithm take time $O(p^2n^2 + n \log n)$ which is $O(pn^2)$. Hence the complete algorithm has a runtime that is $O(pn^2)$.
	
	\textbf{Correctness Analysis:} We assume the notation set in the algorithm. We analyze each stage of the algorithm.

	\textbf{Step 1} First, we show that subsampling steps behave as expected. Let $k_1$ and $k_2$ be random variables denoting the number of planted vertices in $V_1$ and $V_2$ respectively. We must have $k_1 + k_2 = k$. Let $P_0$ denote the event that $k/2 - \sqrt{n} \leq k_1 \leq k/2 + \sqrt{n}$. $P_0$ implies that 
	$k/2 - \sqrt{n} \leq k_2 \leq k/2 + \sqrt{n}$. We show that the probability $\Prob(P_0^c)$ is small and so then assume for the rest of the proof that $P_0$ holds. Since $k_1$ is a hypergeometric random variable, using concentration bounds from \cite[Theorem 1]{hush2005concentration} we have $\Prob(P_0^c) \leq 2\exp(-n/k)$.
	
	Now controlling the subsampling step that is used to obtain the set $S_P$, define $P_1$ to be the event that $\lvert
	\lvert S_P \rvert - 0.5pn \rvert \leq 0.25pn$ and $P_2$ denote the event that
	$\lvert \lvert S_P \cap K \rvert - pk_1 \rvert \leq 0.5pk_1$. Using Chernoff bounds from Lemma~\ref{lem:subsampling-conc}, we have $\Prob(P_1^c|P_0) \leq 2\exp(\frac{-pn}{24})$ and $\Prob(P_2^c|P_0) \leq 2\exp(\frac{-pk_1}{12}) \leq 2\exp(\frac{-pk}{48})$. Defining the event $P := P_0 \cap P_1 \cap P_2$, we can upper bound $\Prob(P^c) \leq \Prob(P_1^c|P_0) + \Prob(P_2^c|P_0) + \Prob(P_0^c) = O(\exp(-n/k))$. For brevity, we let $\hat{n} = \lvert S_P \rvert$ and $\hat{k} = \lvert S_P \cap K \rvert$.

	\textbf{Step 2} We now assume the event $P$ happens and aim to show that the filtering step also behaves as expected and analyse the size of $S_F$ and $S_F \cap K$. The subgraph induced by $S_F$ is the input to the \textsc{Low-Degree-Removal} subroutine of \cite{feige2010finding} so we want to show that $|S_F \cap K|$ is relatively large and that $|S_F|$ is not too large, so that the subroutine works as expected.
	To this end, we define the
	event $F_1$ to denote the event that $S_F$ does not contain too many
	non-clique vertices.  That is, we let $F_1 = \left\{\lvert S_F \setminus
	(S_F \cap K) \rvert \leq pn \exp\left(\frac{-k^2}{24n} \right)\right\}$.
	Similarly, we let $F_2$ define the event that $S_F \cap K$ is fairly large:
	$F_2 = \left\{\lvert \lvert S_F \cap K \rvert - p_0\hat{k} \rvert \leq
	0.5p_0\hat{k}\right\}$ (for some parameter $p_0$ to be defined later). 
	
	If $v \in S_P \setminus (S_P \cap K)$ (that is, it is not a clique vertex), we upper bound the probability that it will be added to  $S_F$. Using a Chernoff bound (Lemma~\ref{lem:chernoff}) and $\omega(\sqrt{n}) = k$
	\begin{align*}
	& \Prob\left(v \in S_F | P , v \in S_P \setminus (S_P \cap K)\right)  = \Prob\left( \lvert \sum\limits_{u \in V_2} \mathbb{1}_{((u,v) \in E)} - \frac{n+k}{4} \rvert \leq 2\sqrt{n} \right) \\
	& \leq \Prob\left(  \sum\limits_{u \in V_2} \mathbb{1}_{((u,v) \in E)} \geq \frac{n+k}{4}  - 2\sqrt{n} \right) \leq \exp\left(\frac{-k^2}{12n} \cdot \left(1-\frac{8\sqrt{n}}{k} \right)^2 \right)
	\end{align*}
	By linearity of expectation, \[\mathbb{E}\left[\lvert S_F \setminus (S_F \cap K) \rvert | P \right] \leq (\hat{n}-\hat{k}) \Prob\left(v \in S_F | P , v \in S_P \setminus (S_P \cap K)\right) = O\left(pn \exp\left(\frac{-k^2}{12n} \cdot \left(1-\frac{8\sqrt{n}}{k} \right)^2 \right) \right) .\] Using Markov's inequality, \[ \Prob(F_1^c| P) = \Prob \left(\lvert S_F \setminus (S_F \cap K) \rvert \geq pn \exp\left(\frac{-k^2}{24n} \right) | P \right) =  O\left(\exp\left(\frac{-k^2}{24n} + \frac{4k}{3\sqrt{n}}\right) \right) = O\left(\exp\left(\frac{-k^2}{48n}\right) \right).\]
	
	We now analyse $S_F \cap K$ and show that it is relatively large (conditioned on $P$). For any $v \in S_P \cap K$, using Chernoff (Lemma~\ref{lem:chernoff})
	\begin{align*}
	1-p_0 & := \Prob\left(v \notin S_F |P \right) = \Prob\left( \lvert \sum\limits_{u \in V_2} \mathbb{1}_{((u,v) \in E)}-\frac{n+k}{4}\rvert \geq 2\sqrt{n} \right) \\ & \leq \Prob\left( \sum\limits_{u \in V_2} \mathbb{1}_{((u,v) \in E)} \geq \frac{n+k}{4}+2\sqrt{n} \right) + \Prob\left( \sum\limits_{u \in V_2} \mathbb{1}_{((u,v) \in E)} \leq \frac{n+k}{4}-2\sqrt{n} \right) \\ & = \Prob\left( \sum\limits_{u \in V_2} \mathbb{1}_{((u,v) \in E)} - k_2 \geq \frac{n-2k_2}{4}+\frac{k-2k_2}{4}+2\sqrt{n} \right) \\& + \Prob\left( \sum\limits_{u \in V_2} \mathbb{1}_{((u,v) \in E)} - k_2 \leq \frac{n-2k_2}{4}+\frac{k-2k_2}{4}-2\sqrt{n} \right) \\& \leq 2\Prob\left( \sum\limits_{u \in V_2 \setminus K} \mathbb{1}_{((u,v) \in E)} - \mathbb{E}\left[\sum\limits_{u \in V_2 \setminus K} \mathbb{1}_{((u,v) \in E)}\right] \geq 1.5\sqrt{n} \right) \leq 2\exp(-3) \leq 1/2.
	\end{align*}
	This gives us $p_0 = \Prob\left(v \in S_F | P \right) \geq \frac{1}{2}$. Since $\lvert S_F \cap K \rvert$ is a sum of $\hat{k}$ independent {\sf{Bern}($p_0$)} random variables, by Lemma~\ref{lem:subsampling-conc} \[\Prob\left(\lvert \lvert S_F \cap K \rvert - p_0\hat{k} \rvert \leq 0.5p_0\hat{k} |P \right)  \geq 1- 2\exp\left(\frac{-p_0\hat{k}}{12} \right) \geq 1- 2\exp\left(\frac{-pk_1}{48} \right) \geq 1- 2\exp\left(\frac{-pk}{192} \right).\] 
	
	Denoting $F := F_1 \cap F_2$, we have thus upper bounded $\Prob(F^c | P) \leq \Prob(F_1^c | P) + \Prob(F_2^c | P) = O\left(\exp\left(\frac{-k^2}{48n}\right) \right)$.

	\textbf{Step 3} We now analyze the event $A_1$ that
	$\textsc{Low-Degree-Removal}$ \cite{feige2010finding} succeeds conditioned on $P,F$. Conditioned on $P \cap F$, we can observe that the subgraph $G'$ induced by the vertex set $S_F$ is distributed as $\G(|S_F|,\frac{1}{2},|S_F \cap K|)$. This is because we have not yet used the randomness from any of the edges in this subgraph. Note that even though we do not know $|S_F\cap K|$, this is not a problem because $\textsc{Low-Degree-Removal}$ does not need the clique size as an input.
	
	Moreover, we can see that $\lvert S_F \rvert \leq pn \exp\left(\frac{-k^2}{24n}\right) + \frac{3p_0\hat{k}}{2} =  O\left(pn \exp\left(\frac{-k^2}{24n}\right) \right) $ and $\lvert S_F \cap K \rvert = \omega(pk)$. This gives \[\sqrt{\lvert S_F \rvert} =
	O\left(\sqrt{pn \exp\left(\frac{-k^2}{24n}\right)} \right)  =
	O\left(pk \right)  = o\left(\lvert S_F \cap K \rvert \right),\] which means that the conditions required for $\textsc{Low-Degree-Removal}$ to work are satisified and we can upper bound the failure probability as $\Prob(A_1^c | P,F) \leq \frac{1}{3}$ \cite[Theorem 1]{feige2010finding}.

	\textbf{Step 4} Finally, we can analyze the event Algorithm~\ref{alg:subsample+filter} proceeds to the
	\textsc{Clique-Completion} step and succeeds. Let $A_2$ denote the event that the output of \textsc{Clique-Completion} is the planted clique $K$. Then the failure probability of the algorithm is $\Prob(A_2^c) \leq \Prob(A_2^c , A_1,F,P) + \Prob(A_1^c | P,F) + \Prob(F^c | P) + \Prob(P^c)$. We have already shown that $\Prob(A_1^c | P,F) + \Prob(F^c | P) + \Prob(P^c) \leq \frac{1}{3}+o(1)$ and so it we only need to show that $\Prob(A_2^c , A_1,F,P) = o(1)$ to complete the proof. Define $S_{\mathsf{good}} :=  A_1 \cap F \cap P$ and note that conditioned on $S_{\mathsf{good}}$, the input vertex set \underbar{$S_C$} to \textsc{Clique-Completion} is a subset of the planted clique $K$ because $2\log n = O\left(pk \right) = O\left(\lvert S_F \cap K \rvert \right)$. Hence we can use the same genie-aided analysis technique as in the proof of Theorem~\ref{thm:subsampledcounting} to show that $\Prob(A_2^c | A_1,F,P) = O\left(\frac{\log^2 n}{n^{0.49}} \right) = o(1)$. This completes the proof.
\end{proof}

\section{Auxiliary Lemmas}
\label{sec:auxlem}
We state the Chernoff bound we use here, for the convenience of the reader.
\begin{lemma}\label{lem:chernoff}
	Let $X = \sum\limits_{i=1}^{n} X_i$ where $X_i$ are independent {\sf{Bern}}($p_i$) random variables. Let $\mu = \sum\limits_{i=1}^{n} p_i$, and $\delta \in (0,1)$. Then
	\[\Prob\left( X \geq (1+\delta) \mu \right)  \leq \exp\left(\frac{-\mu \delta^2}{3} \right) \text{ and } \Prob\left( X \leq (1-\delta) \mu \right)  \leq \exp\left(\frac{-\mu \delta^2}{3} \right)  \]
\end{lemma}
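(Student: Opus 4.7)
The plan is to prove this via the standard exponential moment (Cram\'er--Chernoff) method, bounding $\Pr(X \ge (1+\delta)\mu)$ and $\Pr(X \le (1-\delta)\mu)$ separately, each in three steps: (i) apply Markov's inequality to $e^{tX}$ for an appropriately chosen parameter $t$, (ii) factorize the moment generating function using independence and bound each factor using $1+x \le e^x$, and (iii) optimize $t$ (or simply pick the canonical value $t = \ln(1+\delta)$ in the upper-tail case and $t = -\ln(1-\delta)$ in the lower-tail case) and simplify using elementary inequalities.

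For the upper tail, I would first note that for any $t > 0$, Markov gives $\Pr(X \ge (1+\delta)\mu) \le e^{-t(1+\delta)\mu}\, \E[e^{tX}]$. By independence, $\E[e^{tX}] = \prod_i \E[e^{tX_i}] = \prod_i (1 + p_i(e^t - 1)) \le \prod_i \exp(p_i(e^t - 1)) = \exp(\mu(e^t - 1))$. Plugging in $t = \ln(1+\delta)$ gives the bound $\left(\frac{e^\delta}{(1+\delta)^{1+\delta}}\right)^\mu$. The last step is the analytic estimate $\frac{e^\delta}{(1+\delta)^{1+\delta}} \le e^{-\delta^2/3}$ for $\delta \in (0,1)$, which follows by comparing Taylor expansions: writing $f(\delta) = \delta - (1+\delta)\ln(1+\delta) + \delta^2/3$, one checks $f(0) = 0$ and $f'(\delta) = -\ln(1+\delta) + 2\delta/3 \le 0$ on $(0,1)$ (since $\ln(1+\delta) \ge 2\delta/3$ there, e.g.\ from the series $\ln(1+\delta) = \delta - \delta^2/2 + \cdots \ge \delta - \delta^2/2 \ge 2\delta/3$ when $\delta \le 2/3$, with a direct check for $\delta \in [2/3,1)$).

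For the lower tail, the calculation is symmetric: for $t > 0$, $\Pr(X \le (1-\delta)\mu) = \Pr(-X \ge -(1-\delta)\mu) \le e^{t(1-\delta)\mu}\, \E[e^{-tX}] \le e^{t(1-\delta)\mu}\exp(\mu(e^{-t}-1))$, and optimizing at $t = -\ln(1-\delta)$ yields the bound $\left(\frac{e^{-\delta}}{(1-\delta)^{1-\delta}}\right)^\mu \le e^{-\mu\delta^2/3}$, where the final inequality again reduces to a one-variable calculus check on $(0,1)$.

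The only nontrivial step is the analytic inequality $\tfrac{e^{\pm \delta}}{(1\pm\delta)^{1\pm\delta}} \le e^{-\delta^2/3}$ on $(0,1)$; everything else is mechanical. Since this is a completely standard textbook result, I would in practice simply cite a reference (e.g.\ Mitzenmacher and Upfal, \emph{Probability and Computing}) rather than reproduce the Taylor-expansion argument in full.
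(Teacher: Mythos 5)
Your proof is correct: the exponential-moment argument, the choice $t=\ln(1+\delta)$ (resp.\ $t=-\ln(1-\delta)$), and the analytic estimates $\tfrac{e^{\pm\delta}}{(1\pm\delta)^{1\pm\delta}}\le e^{-\delta^2/3}$ on $(0,1)$ (including the check $\ln(1+\delta)\ge 2\delta/3$) all go through. The paper itself gives no proof of this lemma---it is stated in the auxiliary section purely for the reader's convenience as the standard Chernoff bound---so your write-up is exactly the canonical textbook argument (e.g.\ Mitzenmacher--Upfal) that the paper implicitly relies on, and citing such a reference, as you suggest, would indeed suffice.
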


This implies the following subsampling concentration lemma that proves useful.
\begin{lemma}\label{lem:subsampling-conc}
	Let $V$ be a set of size $n$, and let $K \subset V$ be of size $k$. Let $S_P$ be a subset of $V$ formed by including every element with probability $p$, and excluded otherwise. Then
	\[\Prob \left(0.5pn \leq \lvert S_P \rvert \leq 1.5pn \right) \geq 1- 2\exp\left(\frac{-pn}{12} \right) \text{ and }\Prob \left(0.5pk \leq \lvert S_P \cap K \rvert \leq 1.5pk \right) \geq 1- 2\exp\left(\frac{-pk}{12} \right)  \]
\end{lemma}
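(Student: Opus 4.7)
The plan is to recognize both quantities as sums of independent Bernoulli random variables and apply the Chernoff bound from Lemma~\ref{lem:chernoff} directly with deviation parameter $\delta = 1/2$.

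For the first inequality, write $|S_P| = \sum_{v \in V} X_v$ where $X_v = \mathbb{1}\{v \in S_P\}$ are independent $\mathsf{Bern}(p)$ random variables, so that $\mu := \E[|S_P|] = pn$. Applying both the upper and lower tails of Lemma~\ref{lem:chernoff} with $\delta = 1/2$, each tail is bounded by $\exp(-pn \cdot (1/2)^2 / 3) = \exp(-pn/12)$, and a union bound yields
\[
\Prob\bigl(\,0.5 pn \le |S_P| \le 1.5 pn\,\bigr) \ge 1 - 2\exp\!\left(-\tfrac{pn}{12}\right).
\]

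For the second inequality, observe that $|S_P \cap K| = \sum_{v \in K} X_v$ is again a sum of independent $\mathsf{Bern}(p)$ random variables (the $X_v$'s for $v \in K$ are independent of each other because the inclusion of each vertex in $S_P$ was done independently in the definition of $S_P$), with mean $pk$. The identical Chernoff computation with $\delta = 1/2$ gives both tail bounds at level $\exp(-pk/12)$, and a union bound yields the second stated inequality.

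There is no serious obstacle here; the only thing to be careful about is noting that the $X_v$ for $v \in K$ are genuinely independent (since each vertex is included in $S_P$ independently of the others), so that Lemma~\ref{lem:chernoff} applies verbatim to the restricted sum as well.
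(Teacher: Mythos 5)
Your proof is correct and matches the paper's argument: the paper likewise deduces the lemma directly from the Chernoff bound of Lemma~\ref{lem:chernoff}, and your instantiation with $\delta = \tfrac{1}{2}$ together with a union bound over the two tails is exactly the intended ``follows immediately'' computation.
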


With high probability, \textit{any} clique subset of size $\Theta(\log n)$ has at most $\Theta(\log n)$ non-clique vertices connected to every vertex of the subset. The analysis of such a lemma is contained in the proof of~\cite[Lemma
2.9]{dekel2014finding}.
\begin{lemma}\label{lem:false=pos}
Let $G = (V,E) = \G(n, \frac{1}{2},k)$ and $S$ be any arbitrary subset of the planted clique $K$ with $|S| = (1+c) \log n$ for some constant $c >0$. Let $T$ be the set of all non-clique vertices that are connected to every vertex in $S$. Then, except with probability at most $\left(
\frac{1}{n}\right) ^{\log k}$, $|T| \leq
\frac{1}{c}(1+(2+c)\log k)$
\end{lemma}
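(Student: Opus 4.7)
The plan is a direct union bound over candidate sets of ``false positives,'' exploiting the fact that once we condition on the planted clique $K$, all edges from non-clique vertices to $S$ are independent fair coin flips regardless of which subset $S \subset K$ we choose.

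More concretely, I would first condition on the identity of the planted clique $K$ and fix an arbitrary $S \subset K$ with $|S| = (1+c)\log n$. For any single non-clique vertex $v \notin K$, the event $\{v \in T\}$ is determined by the $(1+c)\log n$ edges from $v$ to $S$, each of which is an independent $\mathrm{Bern}(1/2)$. Hence
\[
\Pr[v \in T] \;=\; 2^{-(1+c)\log n} \;=\; n^{-(1+c)}.
\]
Moreover, for distinct non-clique vertices $v, v' \notin K$, the edge sets determining $\{v \in T\}$ and $\{v' \in T\}$ are disjoint, so these events are mutually independent across non-clique vertices (this independence is what lets the same argument work for \emph{any} fixed $S \subset K$, adversarial or otherwise).

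Next, setting $\ell_0 := \frac{1}{c}\bigl(1+(2+c)\log k\bigr)$, I would union bound over all $\binom{n-k}{\ell_0}$ possible subsets of non-clique vertices of size $\ell_0$:
\[
\Pr\bigl[|T| \geq \ell_0\bigr] \;\leq\; \binom{n-k}{\ell_0}\, n^{-(1+c)\ell_0} \;\leq\; n^{\ell_0}\, n^{-(1+c)\ell_0} \;=\; n^{-c\ell_0}.
\]
Finally, one simply verifies that $c\ell_0 = 1 + (2+c)\log k \geq \log k$, so that $n^{-c\ell_0} \leq n^{-\log k} = (1/n)^{\log k}$, which is the claimed bound.

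There is no serious obstacle here; the only thing to be careful about is the order of quantifiers. The bound must hold for an \emph{arbitrary} (possibly adversarial) $S \subset K$, not just a random one, but this is automatic because the randomness we invoke --- the edges between $V \setminus K$ and $S$ --- is independent of $K$ itself, so the computation above holds pointwise for every fixed choice of $K$ and every fixed $S \subset K$. No union bound over the $\binom{k}{(1+c)\log n}$ choices of $S$ is required (nor would one be affordable).
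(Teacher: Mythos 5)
There is a genuine gap, and it is exactly the point the paper flags in its footnote about adversarially chosen input subsets. Your argument proves the bound for a \emph{fixed} $S\subset K$, chosen independently of the edges between $V\setminus K$ and $K$, and you then assert that no union bound over the $\binom{k}{(1+c)\log n}$ choices of $S$ is needed ``nor would one be affordable.'' But the lemma is stated for an \emph{arbitrary} subset $S$ precisely because, in every application (the input $\underline{S_C}$ to \textsc{Clique-Completion} in Algorithms~\ref{alg:subsampledcounting}--\ref{alg:subsample+filter}, and the genie-aided sets in the proofs of Theorems~\ref{thm:subsampledcounting} and~\ref{thm:subsample+filter}), the subset handed to the subroutine is selected \emph{as a function of the observed graph} (e.g.\ via degrees, which involve exactly the edges from clique to non-clique vertices). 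For such a graph-dependent $S(G)$, a pointwise bound ``for every fixed $S$, $\Prob[\text{bad}(S)]\le\varepsilon$'' does not imply $\Prob[\text{bad}(S(G))]\le\varepsilon$; one needs $\Prob[\exists\, S\subset K,\ |S|=(1+c)\log n:\ \text{bad}(S)]\le\varepsilon$. Your appeal to the independence of the edge randomness from $K$ does not address this, because the issue is correlation between the \emph{choice of $S$} and the edges that determine $T$, not between those edges and $K$.

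The missing step is exactly the union bound over all subsets $S\subset K$ of size $(1+c)\log n$, and contrary to your claim it \emph{is} affordable --- indeed the constant $\ell_0=\frac{1}{c}\bigl(1+(2+c)\log k\bigr)$ is calibrated for it. Your fixed-$S$ computation gives failure probability at most $n^{-c\ell_0}=n^{-1-(2+c)\log k}$ (the paper's per-$S$ bound is $2^{-(c\ell_0-1)\log n}=n^{-(2+c)\log k}$), and multiplying by $\binom{k}{(1+c)\log n}\le 2^{(1+c)\log n\log k}=n^{(1+c)\log k}$ leaves exactly $n^{-\log k}$, which is where the stated bound $(1/n)^{\log k}$ comes from. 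In other words, the slack between your $n^{-c\ell_0}$ and the claimed $n^{-\log k}$, which you simply discarded, is what the paper spends to make the conclusion hold simultaneously for all $S$; without that step the lemma is too weak to support the downstream genie arguments and the footnoted discussion in Section~\ref{subsec:cliquecompletion}.
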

\begin{proof}
  Fix $S \subset K$ such that $\lvert S \rvert = (1 + c)\log{n}$.
The probability there exists a subset of non-clique vertices of size
$\ell$ connected to every element in $S$ is at most $\binom{n}{\ell}2^{-\ell
	(1+c)\log{n}}$.  A union bound then implies that the probability there exists a
subset of non-clique vertices of size at least $\ell_0 =
\frac{1}{c}(1+(2+c)\log k)$ connected to every
element in $S$ is at most $\sum_{\ell = \ell_0}^{n - k} \binom{n}{\ell}2^{-\ell
	(1+c)\log{n}} \leq 2^{-(c\ell_0 - 1)\log{n}}$.  Further union bounding over all
subsets of $K$ of size $(1 + c)\log{n}$ implies the probability of our desired event not happening is at most
\[
\binom{k}{(1 + c)\log{n}}2^{-(c\ell_0 - 1)\log{n}} \leq
2^{(1+c)\log n \log k}2^{-(cl_0-1)\log n} = 2^{- \log k \log n} = \left(
\frac{1}{n}\right) ^{\log k}.
\]
\end{proof}

We now control the number of clique vertices
any non-clique vertex is connected to.
\begin{lemma}\label{lem:2kby3}
Let $G = (V,E) = \G(n, \frac{1}{2},k)$, and let $d$ be the maximum number of clique vertices connected to a non-clique vertex. Then $\Prob(d \geq \frac{2k}{3}) \leq n \exp\left(\frac{-k}{54}\right)$
\end{lemma}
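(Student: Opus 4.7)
The plan is to apply a Chernoff bound to a single non-clique vertex and then take a union bound over all non-clique vertices. Fix any non-clique vertex $v \in V \setminus K$. By the definition of $\G(n,\tfrac{1}{2},k)$, every edge from $v$ to each of the $k$ clique vertices is present independently with probability $\tfrac{1}{2}$, so the number $d_v$ of clique vertices adjacent to $v$ is distributed as $\mathrm{Bin}(k,\tfrac{1}{2})$, with mean $\mu = k/2$.

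I would then invoke the Chernoff bound stated in Lemma~\ref{lem:chernoff} with parameter $\delta = 1/3$, since $(1+\tfrac{1}{3})\mu = (1+\tfrac{1}{3}) \cdot \tfrac{k}{2} = \tfrac{2k}{3}$. This yields
\[
\Prob\!\left( d_v \geq \tfrac{2k}{3} \right) \;\leq\; \exp\!\left( -\tfrac{\mu\,\delta^2}{3} \right) \;=\; \exp\!\left( -\tfrac{k/2 \cdot 1/9}{3} \right) \;=\; \exp\!\left( -\tfrac{k}{54} \right).
\]

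Finally, since $d$ is the maximum of $d_v$ over at most $n-k \leq n$ non-clique vertices, a union bound over these vertices gives
\[
\Prob\!\left( d \geq \tfrac{2k}{3} \right) \;\leq\; (n-k)\exp\!\left(-\tfrac{k}{54}\right) \;\leq\; n\exp\!\left(-\tfrac{k}{54}\right),
\]
which is the desired conclusion. There is no serious obstacle here, since independence across the $k$ potential edges at each non-clique vertex makes Chernoff directly applicable, and the marginal events are over disjoint sets of randomness (the edges incident to different non-clique vertices are independent, though we only need the union bound, not independence).
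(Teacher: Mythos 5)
Your proposal is correct and is essentially the same argument as the paper's: a Chernoff bound (with $\mu = k/2$, $\delta = 1/3$) giving $\exp(-k/54)$ per non-clique vertex, followed by a union bound over at most $n$ such vertices. The paper's proof is just a terser statement of exactly this reasoning.
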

\begin{proof}
A Chernoff bound
(Lemma~\ref{lem:chernoff}) shows that any given non-clique vertex is connected to more than $\frac{2k}{3}$ clique vertices with probability at most $\exp\left(\frac{-k}{54}\right)$. A union bound over the at most $n$ non-clique vertices finishes the proof.
\end{proof}

\section*{Acknowledgments}
The authors would like to thank Amir Abboud for helpful discussions about fine grained complexity that improved the presentation of our results. KAC would like to thank Kannan Ramchandran for asking a question that helped lead to this work.



\addcontentsline{toc}{section}{References}
\bibliographystyle{alpha}
\bibliography{ref}




\end{document}